\newcommand{\aop}{Ann. Phys.~}
\newcommand{\cmp}{Comm. Math. Phys.~}
\newcommand{\jmp}{J. Math. Phys.~}
\newcommand{\jpa}{J. Phys. A~}
\newcommand{\prl}{Phys. Rev. Lett.~}
\newcommand{\laa}{Lin. Alg. App.~}
\definecolor{myurlcolor}{rgb}{0,0,0.7}
\newcommand{\blue}{\textcolor{blue}}
\newcommand{\tinyspace}{\mspace{1mu}}
\newcommand{\op}[1]{\operatorname{#1}}
\newcommand{\abs}[1]{\left\lvert\tinyspace #1 \tinyspace\right\rvert}
\renewcommand{\det}{\operatorname{det}}
\newcommand{\setft}[1]{\mathrm{#1}}
\newcommand{\density}[1]{\setft{D}\left(#1\right)}
\newcommand{\sign}{\op{sign}}
\def\liet{\mathfrak{t}}
\def\vol{\mathrm{vol}}
\def\dh{\mathrm{DH}}
\def \dif {\mathrm{d}}
\def \diag {\mathrm{diag}}
\def \vol {\mathrm{vol}}
\def \re {\mathrm{Re}}
\def\complex{\mathbb{C}}
\def\real{\mathbb{R}}
\def\natural{\mathbb{N}}
\def\integer{\mathbb{Z}}
\def\I{\mathbb{1}}
\newenvironment{mylist}[1]{\begin{list}{}{
    \setlength{\leftmargin}{#1}
    \setlength{\rightmargin}{0mm}
    \setlength{\labelsep}{2mm}
    \setlength{\labelwidth}{8mm}
    \setlength{\itemsep}{0mm}}}
    {\end{list}}
\newcommand{\inner}[2]{\langle #1 , #2\rangle}
\newcommand{\Inner}[2]{\left\langle #1 , #2\right\rangle}
\newcommand{\defeq}{\stackrel{\smash{\textnormal{\tiny def}}}{=}}
\newcommand{\Pa}[1]{\left(#1\right)}
\newcommand{\Br}[1]{\left[#1\right]}
\newcommand{\Set}[1]{\left\{#1\right\}}
\DeclareMathOperator{\trace}{Tr}
\newcommand{\Ptr}[2]{\trace_{#1}\Pa{#2}}
\newcommand{\Tr}[1]{\Ptr{}{#1}}
\def\cI{\mathcal{I}}
\def\cO{\mathcal{O}}
\def\cU{\mathcal{U}}
\def\bsA{\boldsymbol{A}}\def\bsB{\boldsymbol{B}}\def\bsC{\boldsymbol{C}}
\def\bsH{\boldsymbol{H}}
\def\bsS{\boldsymbol{S}}
\def\bsU{\boldsymbol{U}}\def\bsV{\boldsymbol{V}}\def\bsW{\boldsymbol{W}}\def\bsX{\boldsymbol{X}}\def\bsY{\boldsymbol{Y}}
\def\bsZ{\boldsymbol{Z}}
\def\bsa{\boldsymbol{a}}\def\bsb{\boldsymbol{b}}\def\bsc{\boldsymbol{c}}
\def\bss{\boldsymbol{s}}
\def\bsu{\boldsymbol{u}}\def\bsx{\boldsymbol{x}}
\def\rD{\mathrm{D}}\def\rE{\mathrm{E}}
\def\rG{\mathrm{G}}\def\rH{\mathrm{H}}\def\rJ{\mathrm{J}}
\def\rQ{\mathrm{Q}}\def\rS{\mathrm{S}}
\def\rU{\mathrm{U}}
\def\sC{\mathscr{C}}
\def\T{\textsf{T}}
\newtheorem{thrm}{Theorem}[section]
\newtheorem{lem}[thrm]{Lemma}
\newtheorem{prop}[thrm]{Proposition}
\newtheorem{cor}[thrm]{Corollary}
\theoremstyle{definition}
\newtheorem{remark}[thrm]{Remark}
\newtheorem{exam}[thrm]{Example}
\numberwithin{equation}{section}
\newcounter{questionnumber}
\begin{document}

\title{A variant of Horn's problem and the derivative principle}

\author{\blue{Lin Zhang}$^{1,2}$\footnote{E-mail: godyalin@163.com; linzhang@mis.mpg.de},\quad \blue{Hua Xiang}$^3$\footnote{E-mail: hxiang@whu.edu.cn}\\
  {\it\small $^1$Institute of Mathematics, Hangzhou Dianzi University, Hangzhou 310018, PR~China}\\
  {\it\small $^2$Max-Planck-Institute for Mathematics in the Sciences, Leipzig 04103, Germany}\\
  {\it\small $^3$School of Mathematics and Statistics, Wuhan University, Wuhan 430072, PR~China}}

\date{}
\maketitle \mbox{}\hrule\mbox{}
\begin{abstract}

Identifying the spectrum of the sum of two given Hermitian matrices
with fixed eigenvalues is the famous Horn's problem.
In this note, we investigate a variant of Horn's problem, i.e., we
identify the probability density function (abbr. pdf) of the
diagonals of the sum of two random Hermitian matrices with given
spectra. We then use it to re-derive the pdf of the eigenvalues of
the sum of two random Hermitian matrices with given eigenvalues via the
\emph{derivative principle}, a powerful tool used to get the exact
probability distribution by reducing to the corresponding
distribution of diagonal entries.
We can also recover Jean-Bernard Zuber's recent results.
Moreover, as an illustration, we derive the analytical expressions
of eigenvalues of the sum of two random Hermitian matrices from
$\rG\rU\rE(n)$ or Wishart ensemble by the derivative principle,
respectively.
We also investigate the statistics of exponential of random matrices
and connect them with Golden-Thompson inequality, and partially answer
a question proposed by Forrester.
Some potential applications in quantum information theory, such as
uniform average quantum Jensen-Shannon divergence and average
coherence of uniform mixture of two orbits,
are discussed.\\~\\
\textbf{Mathematics Subject Classification.} 22E70, 81Q10, 46L30, 15A90, 81R05\\
\textbf{Keywords.} Horn's problem; derivative principle; probability
density function; quantum Jensen-Shannon divergence; quantum
coherence

\end{abstract}
\mbox{}\hrule\mbox{}

\section{Introduction}

The famous Horn's problem asks for the spectrum of the sum of two
given Hermitian matrices with fixed eigenvalues. Specifically,
Horn's problem characterizes the triple $(\bsa,\bsb,\bsc)$ for which
there exist Hermitian matrices $\bsA,\bsB,\bsC$ with respective
eigenvalues $\bsa=(a_1,\ldots,a_n),\bsb=(b_1,\ldots,b_n)$ and
$\bsc=(c_1,\ldots,c_n)$ satisfying the following constraint:
$$
\bsA+\bsB=\bsC.
$$
This problem has an affirmative answer \cite{Cao2012}, in terms of
linear inequalities which are now called \emph{Horn's inequalities}.
The solutions form a convex polytope whose describing inequalities
have been conjectured by Horn in 1962 \cite{Horn1962}. Note that the
convex polytope for the solution of Horn's problem is, in general,
nontrivial. Hence each point in this convex polytope corresponds to
a possible eigenvalue vector for the sum, except the trivial cases
(for example, when one of the matrices is scalar).

Although Horn's problem is apparently an elementary problem (a
complete answer to Horn's problem takes almost a century), it turns
out to be connected with many areas of mathematics: linear algebra
of course \cite{Knutson}, but also combinatorics, algebraic geometry
\cite{Knutson}, symplectic geometry, and even probability theory,
etc. For instance, Alekseev \emph{et. al} give a symplectic proof of
the Horn inequalities on eigenvalues of a sum of two Hermitian
matrices with given spectra \cite{Alekseev2014}, and Zuber
investigate the probability distribution over the Horn's polytope
\cite{JBZ2017}. Related probabilistic works include
\cite{Faraut2019,Forrester2019}.

Besides, some researchers give a description of the
Duistermaat-Heckman measure on the Horn polytope.
Mathematical speaking, the eigenvalue distributions involved are
so-called \emph{Duistermaat-Heckman measures} \cite{Christandl2014},
which are defined using the push-forward of the Liouville measure on
a symplectic manifold along the moment map. We follow the notations
along the paper \cite{Christandl2014}, and consider the problem of
describing the sum of two coadjoint orbits $\cO_{\bsa}+\cO_{\bsb}$,
where $K$ acts on $\cO_{\bsa}\times \cO_{\bsb}$ diagonally with
moment map $(\bsA,\bsB)\mapsto \bsA+\bsB$ and we have
\cite{Christandl2014}: Let $\bsa\in\liet^*_{>0}$ and $\bsb\in
\liet^*_{\geqslant0}$. Then $\dh^K_{\cO_{\bsa}\times\cO_{\bsb}} =
\sum_{w\in W}(-1)^{l(w)}\delta_{w\bsa}*\dh^T_{\cO_{\bsb}}$, where
$l(w)$ is the length of the Weyl group element $w$; and $\dh$ is the
Duistermaat-Heckamn measure.
Apparently, this result gives the distribution of solutions of
Horn's problem in theoretical level completely, but however in
specific computational problems, it is less useful. In this paper,
instead of Horn's polytope itself, we will consider the pdf of the
diagonals of sum of two Hermitian matrices with prescribed spectra.
Then by employing derivative principle, we obtain the probability
distribution density of the eigenvalues of sum of two Hermitian
matrices with prescribed spectra. The support for such probability
distribution density function is just Horn's polytope, determined by
Horn's inequalities. The obtained pdfs for the diagonals and
spectra, respectively, are expressed by complicated complex
integrals that can be explicitly calculated in lower dimensional
cases, in particular for $2\times 2$ case. We also apply this
special case to analyze some quantities used in quantum information
theory.
We remark that Horn's problem is subsumed into the one-body quantum
marginal problem, i.e., the problem of determining the set of
possible reduced density matrices, known as the quantum marginal
problem in quantum information theory and as the
$N$-representability problem in quantum chemistry.

The paper is organized as follows. In
Sect.~\ref{sect:pdf-diagonals}, we derive an analytical formula for
the pdf of the diagonals of sum of two random Hermitian matrices
with prescribed spectra. The main results are summarized into
Theorem~\ref{thrm:main} and Corollary~\ref{cor:main}.
In Sect.~\ref{sect:der-pin}, we firstly recall the derivative
principle, and then combine this with the pdf of diagonals to derive
the pdf of eigenvalues of sum of two random Hermitian matrices (see
Theorem~\ref{thrm:CorollaryDP}).
We also find 
that the results obtained in \cite{JBZ2017} can be naturally derived
from our main result. These materials are summarized into the
Appendix~\ref{app:A}.
We further derive the analytical expressions for eigenvalues of the
sum of two random Hermitian matrices from $\rG\rU\rE(n)$ or Wishart
ensemble by derivative principle, respectively, which are summarized
in Propositions~\ref{th:sum-of-Kgue}.
Sequentially, in Sect.~\ref{sect:GTinequality}, we use the pdf of
the sum of two GUE random matrices to derive the expectation of the
matrix exponential and partly answer a question proposed by
Forrester \cite{Forrester2014}.
As another application of our results, in Sect.~\ref{sect:app}, we
use our lower dimensional result to analyze some quantities used in
quantum information theory. Finally, we conclude with some remarks
in Sect.~\ref{sect:end}.

\section{The pdf of diagonals of the sum of two random Hermitian matrices with given spectra} \label{sect:pdf-diagonals}

We present an explicit formula concerning the joint distribution
density of the diagonals of the sum of two random Hermitian matrices
with given spectra. This is a new result in this note, and it will
be used to derive the pdf of eigenvalues of the sum of two random
Hermitian matrices with given spectra.
To keep accordance with the notation in the present literature, the notation adopted here is only a little different from that in
\cite{Mejia2017}.

Denote $\bsx=(x_1,\ldots,x_n)$, $\widehat{\bsx} =
\diag(x_1,\ldots,x_n)$ and $\Delta(\bsx)=\prod_{1\leqslant
i<j\leqslant n}(x_i-x_j),[\dif\bsx]=\prod^n_{j=1}\dif x_j$. We also
denote the unitary orbit with spectrum $\bsx$ by
$\cU(\bsx)=\Set{\bsU\widehat \bsx \bsU^\dagger: \bsU\in\rU(n)}$,
where $\rU(n)$ stands for the $n\times n$ unitary matrix group.
\begin{thrm}\label{thrm:main}
Assume that two random matrices $\bsA$ and $\bsB$ chosen uniformly
on the unitary orbits $\cU(\bsa)$ and $\cU(\bsb)$, respectively, then the
joint pdf $q(\bsC^\diag|\bsa,\bsb)$ of the diagonal part
$\bsC^\diag$ of the sum $\bsC=\bsA+\bsB$ is given by the following
integral
\begin{eqnarray}
q(\bsC^\diag|\bsa,\bsb) &=&\frac M{\Delta(\bsa)\Delta(\bsb)}
\int_{\real^n}[\dif\bsx]\frac{\det\Pa{e^{\mathrm{i}x_ia_j}}\det\Pa{e^{\mathrm{i}x_ib_j}}}{\Delta(\bsx)^2\prod^n_{i=1}e^{\mathrm{i}x_i\bsC_{ii}}},
\end{eqnarray}
where
\begin{eqnarray}\label{eq:constM}
M =
\frac{\Pa{\prod^n_{k=1}\Gamma(k)}^2}{(2\pi)^n\mathrm{i}^{n(n-1)}}.
\end{eqnarray}
\end{thrm}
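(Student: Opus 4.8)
The plan is to compute the joint pdf through its Fourier transform (characteristic function), exploiting two structural facts: the diagonal of the sum splits additively, $\bsC_{ii}=\bsA_{ii}+\bsB_{ii}$, and $\bsA,\bsB$ are drawn independently from their orbits. Consequently the characteristic function of $\bsC^\diag$ factorizes,
\[
\widehat{q}(\bsx)=\expect\Br{e^{\mathrm{i}\sum_j x_j \bsA_{jj}}}\,\expect\Br{e^{\mathrm{i}\sum_j x_j \bsB_{jj}}},
\]
so it suffices to evaluate a single-orbit characteristic function and then invert.

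First I would rewrite the single-orbit factor as $\expect[e^{\mathrm{i}\trace(\widehat{\bsx}\bsA)}]$ and insert $\bsA=\bsU\widehat{\bsa}\bsU^\dagger$ with $\bsU$ Haar-distributed on $\rU(n)$. This converts the expectation into the Harish-Chandra--Itzykson--Zuber (HCIZ) integral
\[
\int_{\rU(n)} e^{\mathrm{i}\trace(\widehat{\bsx}\bsU\widehat{\bsa}\bsU^\dagger)}\,\dif\bsU
=\Pa{\prod_{k=1}^{n}\Gamma(k)}\,\frac{\det\Pa{e^{\mathrm{i}x_i a_j}}}{\mathrm{i}^{n(n-1)/2}\,\Delta(\bsx)\Delta(\bsa)},
\]
with the analogous expression carrying $\bsb$ in place of $\bsa$. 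Multiplying the two independent factors gives
\[
\widehat{q}(\bsx)=\frac{\Pa{\prod_{k=1}^{n}\Gamma(k)}^2}{\mathrm{i}^{n(n-1)}\,\Delta(\bsa)\Delta(\bsb)}\,\frac{\det\Pa{e^{\mathrm{i}x_i a_j}}\det\Pa{e^{\mathrm{i}x_i b_j}}}{\Delta(\bsx)^2},
\]
and I would then recover the density by the inverse Fourier transform
\[
q(\bsC^\diag|\bsa,\bsb)=\frac{1}{(2\pi)^n}\int_{\real^n}[\dif\bsx]\,e^{-\mathrm{i}\sum_j x_j \bsC_{jj}}\,\widehat{q}(\bsx).
\]
Collecting the prefactors reproduces precisely the constant $M$ of \eqref{eq:constM}, which completes the derivation.

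The hard part will be the analytic bookkeeping rather than the structure. Two points need care. First, one must pin down the HCIZ normalization $\prod_{k=1}^n\Gamma(k)$ together with the correct power of $\mathrm{i}$, which amounts to taking the purely imaginary parameter $t=\mathrm{i}$ in the real-parameter HCIZ formula and tracking the Vandermonde sign conventions so that the $\Delta(\bsa)\Delta(\bsb)$ in the denominator matches the stated orientation. Second, one must justify the Fourier inversion: the apparent poles of $\widehat{q}$ along the coincidence loci $x_i=x_j$ are only removable because the determinant factors vanish there as well, and since $\widehat{q}$ does not decay at infinity, the inversion is most safely read as an identity of tempered distributions (an oscillatory integral) rather than an absolutely convergent one. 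I expect this convergence/regularization issue to be the genuine technical obstacle, whereas the HCIZ step and the factorization are essentially mechanical.
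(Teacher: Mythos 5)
Your proof is correct, and it reaches the stated formula with the right constant $M$; but it takes a cleaner opening route than the paper. The paper starts from Zuber's matrix-level Fourier representation of the \emph{full} density $p(\bsC|\bsa,\bsb)$ — an integral over all Hermitian $\bsX$ with the matrix delta normalization $2^{-n}\pi^{-n^2}$ — and then marginalizes over the off-diagonal entries $\bsC^{\mathrm{off}}$, which produces $\delta\Pa{\bsX^{\mathrm{off}}}$ and collapses the $n^2$-dimensional $\bsX$-integral down to its diagonal; only then does HCIZ enter, evaluated at diagonal arguments. You instead observe from the outset that $\bsC_{jj}=\bsA_{jj}+\bsB_{jj}$ with $\bsA^{\diag}$ and $\bsB^{\diag}$ independent, so the characteristic function of the diagonal vector factorizes into two HCIZ integrals at purely imaginary diagonal argument, and a single $n$-dimensional Fourier inversion finishes the job. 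After the paper's delta-function collapse, its remaining computation is literally identical to yours (product of two HCIZ factors, $\Delta(\mathrm{i}\bsX^{\diag})=\mathrm{i}^{\binom{n}{2}}\Delta(\bsX^{\diag})$, inversion with $(2\pi)^{-n}$), so the two arguments converge; what your route buys is that you never touch the $n^2$-dimensional matrix Fourier transform or its normalization, which is exactly the delicate point the paper has to flag in a footnote (the factor $2^{-n}\pi^{-n^2}$ versus Zuber's $(2\pi)^{-n^2}$, tied to Hermiticity of $\bsX$). What the paper's route buys is the explicit exhibition of $q$ as a marginal of the full Horn density $p(\bsC|\bsa,\bsb)$, which it reuses later when connecting to the eigenvalue pdf. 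Your two caveats are the right ones and are shared by the paper's derivation: the integrand's singularities at $x_i=x_j$ are removable since the determinants vanish there, and since the distribution of $\bsC^{\diag}$ is supported on the hyperplane $\sum_j \bsC_{jj}=\sum_j(a_j+b_j)$ (witness the factor $\delta\Pa{\sum_j(a_j+b_j-\bsC_{jj})}$ in Corollary~\ref{cor:main}), the "pdf" and its Fourier inversion must be read as identities of tempered distributions, not absolutely convergent integrals.
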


\begin{proof}
From \cite{JBZ2017}, we see that the pdf of $\bsC=\bsU\widehat{\bsa}
\bsU^\dagger + \bsV\widehat{\bsb} \bsV^\dagger$ is given by
\footnote{Note that the factor $2^{-n}\pi^{-n^2}$ is different
from that used by Zuber, i.e., $(2\pi)^{-n^2}$. 
There exists a symmetry condition about $X$, that is,
Hermiticity of $X$, and $\delta(X)=2^{-n}\pi^{-n^2}\int
\exp(\mathrm{i}\Tr{TX})[\dif T]$.}
\begin{eqnarray*}
p(\bsC|\bsa,\bsb) = \frac1{2^n\pi^{n^2}} \int [\dif \bsX]
e^{-\mathrm{i}\Tr{\bsX\bsC}}\int_{\rU(n)}\int_{\rU(n)}\dif\mu_{\mathrm{Haar}}(\bsU)
\dif\mu_{\mathrm{Haar}}(\bsV)e^{\mathrm{i}\Tr{\bsX\bsU\widehat{\bsa}\bsU^\dagger}}e^{\mathrm{i}\Tr{\bsX\bsV\widehat{\bsb}\bsV^\dagger}}.
\end{eqnarray*}
We will consider the following question, i.e., the probability
density function of diagonals of sum of two Hermitian matrices:
\begin{eqnarray}
q(\bsC^\diag|\bsa,\bsb) = \int [\dif \bsC^{\mathrm{off}}]
p(\bsC|\bsa,\bsb).
\end{eqnarray}
Employ  the following identity:   $\bsX=\bsX^\diag\oplus
\bsX^{\mathrm{off}}$, $\bsC=\bsC^\diag\oplus \bsC^{\mathrm{off}}$,
and
\begin{eqnarray}
\delta\Pa{\bsX^{\mathrm{off}}} = \frac1{\pi^{n(n-1)}}\int [\dif
\bsC^{\mathrm{off}}]e^{-\mathrm{i}\Tr{\bsX^{\mathrm{off}}
\bsC^{\mathrm{off}}}}.
\end{eqnarray}
It follows that
\begin{eqnarray*}
&&q(\bsC^\diag|\bsa,\bsb) = \int [\dif \bsC^{\mathrm{off}}]
p(\bsC|\bsa,\bsb)\\
&& =\frac1{2^n\pi^{n^2}}\int[\dif \bsX]\int [\dif
\bsC^{\mathrm{off}}]e^{-\mathrm{i}\Tr{\bsX\bsC}}\int_{\rU(n)}\int_{\rU(n)}\dif\mu_{\mathrm{Haar}}(\bsU)\dif\mu_{\mathrm{Haar}}(\bsV)e^{\mathrm{i}\Tr{\bsX\bsU\widehat{\bsa}\bsU^\dagger}}
e^{\mathrm{i}\Tr{\bsX\bsV\widehat{\bsb}\bsV^\dagger}}\\
&&=\frac1{2^n\pi^{n^2}}\int[\dif \bsX]e^{-\mathrm{i}\Tr{\bsX^\diag
\bsC^\diag}}\int [\dif
\bsC^{\mathrm{off}}]e^{-\mathrm{i}\Tr{\bsX^{\mathrm{off}}
\bsC^{\mathrm{off}}}}\\
&&~~~~~~\times\int_{\rU(n)}\int_{\rU(n)}\dif\mu_{\mathrm{Haar}}(\bsU)
\dif\mu_{\mathrm{Haar}}(\bsV)e^{\mathrm{i}\Tr{\bsX\bsU\widehat{\bsa}\bsU^\dagger}}
e^{\mathrm{i}\Tr{\bsX\bsV\widehat{\bsb}\bsV^\dagger}}.
\end{eqnarray*}
Thus,
\begin{eqnarray*}
q(\bsC^\diag|\mathbf{a},\mathbf{b}) &=&\frac1{(2\pi)^n}\int[\dif
\bsX]e^{-\mathrm{i}\Tr{\bsX^\diag \bsC^\diag}}\delta\Pa{\bsX^{\mathrm{off}}}\\
&&\times\int_{\rU(n)}\int_{\rU(n)}\dif\mu_{\mathrm{Haar}}(\bsU)\dif\mu_{\mathrm{Haar}}(\bsV)
e^{\mathrm{i}\Tr{\bsX\bsU\widehat{\bsa}\bsU^\dagger}}e^{\mathrm{i}\Tr{\bsX\bsV\widehat{\bsb}\bsV^\dagger}}\\
&=&\frac1{(2\pi)^n}\int[\dif
\bsX^\diag]e^{-\mathrm{i}\Tr{\bsX^\diag \bsC^\diag}}\int_{\rU(n)}\int_{\rU(n)}\dif\mu_{\mathrm{Haar}}(\bsU)
\dif\mu_{\mathrm{Haar}}(\bsV)\\
&&\times\int[\dif
\bsX^{\mathrm{off}}]\delta\Pa{\bsX^{\mathrm{off}}}e^{\mathrm{i}\Tr{\bsX\bsU\widehat{\bsa}\bsU^\dagger}}e^{\mathrm{i}\Tr{\bsX\bsV\widehat{\bsb}\bsV^\dagger}}.
\end{eqnarray*}
Therefore,
\begin{eqnarray*}
q(\bsC^\diag|\bsa,\bsb)&=&\frac1{(2\pi)^n}\int[\dif
\bsX^\diag]e^{-\mathrm{i}\Tr{\bsX^\diag
\bsC^\diag}}\\
&&\times\int_{\rU(n)}\int_{\rU(n)}\dif\mu_{\mathrm{Haar}}(\bsU)\dif\mu_{\mathrm{Haar}}(\bsV)
e^{\mathrm{i}\Tr{\bsX^\diag
\bsU\widehat{\bsa}\bsU^\dagger}}e^{\mathrm{i}\Tr{\bsX^\diag
\bsV\widehat{\bsb}\bsV^\dagger}}
\end{eqnarray*}
and
\begin{eqnarray*}
q(\bsC^\diag|\bsa,\bsb)&=&\frac1{(2\pi)^n}\int[\dif
\bsX^\diag]e^{-\mathrm{i}\Tr{\bsX^\diag \bsC^\diag}}\Pa{\prod^n_{k=1}\Gamma(k)
\frac{\det\Pa{e^{\mathrm{i}\bsX_{ii}a_j}}}{\Delta(\mathrm{i}\bsX^\diag)\Delta(\bsa)}}
\Pa{\prod^n_{k=1}\Gamma(k)\frac{\det\Pa{e^{\mathrm{i}\bsX_{ii}b_j}}}{\Delta(\mathrm{i}\bsX^\diag)\Delta(\bsb)}}\\
&=&\frac1{(2\pi)^n\mathrm{i}^{2\binom{n}{2}}}\frac{\Pa{\prod^n_{k=1}\Gamma(k)}^2}{\Delta(\bsa)\Delta(\bsb)}\int\frac{[\dif
\bsX^\diag]}{\Delta(\bsX^\diag)^2}e^{-\mathrm{i}\Tr{\bsX^\diag
\bsC^\diag}}\det\Pa{e^{\mathrm{i}\bsX_{ii}a_j}}\det\Pa{e^{\mathrm{i}\bsX_{ii}b_j}}.
\end{eqnarray*}
This completes the proof.
\end{proof}

\begin{lem}
For given real numbers $x_j$ and $\lambda_j$ where $j=1,\ldots,n$,
denote $\bar{x} = \frac1n\sum^n_{k=1}x_k$, then have the following
identities:
\begin{eqnarray} \label{eqn:CorProof_Det}
\det\Pa{e^{\mathrm{i}x_i\lambda_j}}=e^{\mathrm{i}\bar{x}\sum^n_{k=1}\lambda_k}\sum_{\sigma\in
S_n} \sign(\sigma)
\prod^{n-1}_{k=1}\exp\Br{\mathrm{i}(x_k-x_{k+1})\Pa{\sum^k_{j=1}\lambda_{\sigma(j)}
-\frac kn\sum^n_{j=1}\lambda_j}}
\end{eqnarray}
and
\begin{eqnarray} \label{eqn:CorProof_Product}
\prod^n_{k=1}e^{-\mathrm{i}x_kc_k}=
\exp\Br{-\mathrm{i}\bar{x}\sum^n_{k=1}c_k}
\exp\Br{-\mathrm{i}\sum^{n-1}_{k=1}(x_k-x_{k+1})\Pa{\sum^k_{j=1}c_j
- \frac kn\sum^k_{j=1}c_j}}.
\end{eqnarray}
\end{lem}

\begin{proof}
With the definition of $\bar x$, we have
\begin{eqnarray*}
\det\Pa{e^{\mathrm{i}x_i\lambda_j}}=e^{\mathrm{i}\bar{x}\sum^n_{k=1}\lambda_k}\det\Pa{e^{\mathrm{i}(x_i-\bar{x})\lambda_j}}.
\end{eqnarray*}
Next we recall the Abel's identity as follows. In fact, we have
\begin{eqnarray*}
\sum^n_{k=1} x_ky_k = y_n\sum^n_{k=1}x_k +
\sum^{n-1}_{k=1}(y_k-y_{k+1})\Pa{\sum^k_{j=1}x_j}.
\end{eqnarray*}
Now we expand the determinant
$\det\Pa{e^{\mathrm{i}(x_i-\bar{x})\lambda_j}}$ as below:
\begin{eqnarray*}
\det\Pa{e^{\mathrm{i}(x_i-\bar{x})\lambda_j}} &=& \sum_{\sigma\in
S_n}
\sign(\sigma)\prod^n_{k=1}e^{\mathrm{i}(x_k-\bar{x})\lambda_{\sigma(k)}}
=\sum_{\sigma\in S_n}
\sign(\sigma)\exp\Pa{\mathrm{i}\sum^n_{k=1}(x_k-\bar{x})\lambda_{\sigma(k)}},
\end{eqnarray*}
where $\sign(\sigma)=\pm1$ for odd $(-1)$ or even $(+1)$
permutation. By using Abel's identity, we have
\begin{eqnarray*}
\sum^n_{k=1}\lambda_{\sigma(k)}(x_k-\bar{x}) &=&
(x_n-\bar{x})\sum^n_{k=1}\lambda_{\sigma(k)} +
\sum^{n-1}_{k=1}(x_k-x_{k+1})\Pa{\sum^k_{j=1}\lambda_{\sigma(j)}}\\
&=&
\sum^{n-1}_{k=1}(x_k-x_{k+1})\Pa{\sum^k_{j=1}\lambda_{\sigma(j)}} -
(\bar{x}-x_n)\sum^n_{j=1}\lambda_j.
\end{eqnarray*}
Again, by using Abel's identity, we have
\begin{eqnarray*}
n\bar{x} = \sum^n_{k=1}1\cdot x_k = x_n\sum^n_{k=1}1 +
\sum^{n-1}_{k=1} (x_k-x_{k+1})\Pa{\sum^k_{j=1}1} = nx_n+
\sum^{n-1}_{k=1} k(x_k-x_{k+1}).
\end{eqnarray*}
That is,
\begin{eqnarray*}
\bar{x} - x_n= \sum^{n-1}_{k=1} \frac kn(x_k-x_{k+1}).
\end{eqnarray*}
Finally, we have
\begin{eqnarray*}
\sum^n_{k=1}\lambda_{\sigma(k)}(x_k-\bar{x}) &=&
\sum^{n-1}_{k=1}(x_k-x_{k+1})\Pa{\sum^k_{j=1}\lambda_{\sigma(j)}} -
\Pa{\sum^{n-1}_{k=1} \frac kn(x_k-x_{k+1})}\sum^n_{j=1}\lambda_j\\
&=&\sum^{n-1}_{k=1}(x_k-x_{k+1})\Pa{\sum^k_{j=1}\lambda_{\sigma(j)}
-\frac kn\sum^n_{j=1}\lambda_j}.
\end{eqnarray*}
That is,
\begin{eqnarray*}
\sum^n_{k=1}\lambda_{\sigma(k)}(x_k-\bar{x}) =
\sum^{n-1}_{k=1}(x_k-x_{k+1})\Pa{\sum^k_{j=1}\lambda_{\sigma(j)}
-\frac kn\sum^n_{j=1}\lambda_j}.
\end{eqnarray*}
From the above discussion, we see that
\begin{eqnarray*}
\det\Pa{e^{\mathrm{i}x_i\lambda_j}}=e^{\mathrm{i}\bar{x}\sum^n_{k=1}\lambda_k}\sum_{\sigma\in
S_n} \sign(\sigma)
\prod^{n-1}_{k=1}\exp\Br{\mathrm{i}(x_k-x_{k+1})\Pa{\sum^k_{j=1}\lambda_{\sigma(j)}
-\frac kn\sum^n_{j=1}\lambda_j}}.
\end{eqnarray*}
Note that
$\prod^n_{k=1}e^{-\mathrm{i}x_kc_k}=\exp\Br{-\mathrm{i}\sum^n_{k=1}c_kx_k}$,
where
$$
\sum^n_{k=1}c_kx_k = x_n\sum^n_{k=1}c_k +
\sum^{n-1}_{k=1}(x_k-x_{k+1})\Pa{\sum^k_{j=1}c_j}.
$$
Hence, via $x_n= \bar{x} - \sum^{n-1}_{k=1} \frac kn(x_k-x_{k+1})$,
it follows that
\begin{eqnarray}\label{eq:ip-formula}
\sum^n_{k=1}c_kx_k = \bar{x}\sum^n_{k=1}c_k +
\sum^{n-1}_{k=1}(x_k-x_{k+1})\Pa{\sum^k_{j=1}c_j - \frac
kn\sum^k_{j=1}c_j},
\end{eqnarray}
and we have the formula \eqref{eqn:CorProof_Product}.
This completes the proof.
\end{proof}
In the following we derive a further simplified expression for the
analytical formula in Theorem~\ref{thrm:main}. By
Theorem~\ref{thrm:main}, we see that it suffices to calculate the
following integral
\begin{eqnarray}
I(\bsa,\bsb:\bsC^\diag)&:=&\int_{\real^n}[\dif\bsx]\det\Pa{e^{\mathrm{i}x_ia_j}}\det\Pa{e^{\mathrm{i}x_ib_j}}\Delta(\bsx)^{-2}\prod^n_{i=1}e^{-\mathrm{i}x_i\bsC_{ii}}.
\end{eqnarray}
\begin{cor}\label{cor:main}
The joint pdf $q(\bsC^\diag|\bsa,\bsb)$ of the diagonal part
$\bsC^\diag$ of the sum $\bsC=\bsA+\bsB$ is given by
\begin{eqnarray}\label{eq:q(Cdiag|a,b)reformula}
q(\bsC^\diag|\bsa,\bsb)
=\frac{M}{\Delta(\bsa)\Delta(\bsb)}I(\bsa,\bsb:\bsC^\diag),
\end{eqnarray}
where $M$ is from \eqref{eq:constM} and
\begin{eqnarray}\label{eq:I(a,b:Cdiag)inCor2.3}
I(\bsa,\bsb:\bsC^\diag)
=2\pi\delta\Pa{\sum^n_{j=1}(a_j+b_j-\bsC_{jj})}\sum_{\sigma,\tau\in
S_n}\sign(\sigma\tau)\int_{\real^{n-1}}\frac{[\dif\bsu]}{\widetilde\Delta(\bsu)^2}
\prod^{n-1}_{k=1}e^{\mathrm{i}u_kB_k(\sigma,\tau)}
\end{eqnarray}
with
\begin{eqnarray} \label{eqn:BksigmainCor}
B_k(\sigma,\tau):=\sum^k_{j=1}(a_{\sigma(j)}+b_{\tau(j)}-\bsC_{jj})
-\frac kn\sum^n_{j=1}(a_j+b_j-\bsC_{jj}).
\end{eqnarray}
\end{cor}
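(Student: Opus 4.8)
The plan is to feed the determinant expansion \eqref{eqn:CorProof_Det} and the product decomposition \eqref{eqn:CorProof_Product} directly into the integral $I(\bsa,\bsb:\bsC^\diag)$ of Theorem~\ref{thrm:main} and then separate the ``center-of-mass'' variable from the gap variables. First I would apply \eqref{eqn:CorProof_Det} twice, once with $\lambda=\bsa$ and summation permutation $\sigma$, and once with $\lambda=\bsb$ and permutation $\tau$, to rewrite $\det\Pa{e^{\mathrm{i}x_ia_j}}\det\Pa{e^{\mathrm{i}x_ib_j}}$, and simultaneously use \eqref{eqn:CorProof_Product} with $c_j=\bsC_{jj}$ to rewrite $\prod^n_{i=1}e^{-\mathrm{i}x_i\bsC_{ii}}$. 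Each of the three factors splits into a prefactor depending only on $\bar{x}=\frac1n\sum_k x_k$ and a product over $k=1,\ldots,n-1$ of exponentials in the gaps $u_k:=x_k-x_{k+1}$. The three $\bar{x}$-prefactors combine into $\exp\Br{\mathrm{i}\bar{x}\sum^n_{k=1}(a_k+b_k-\bsC_{kk})}$, while the three gap-contributions at each index $k$ add up. Here the essential bookkeeping step is to check that this sum equals exactly $\mathrm{i}u_kB_k(\sigma,\tau)$ with $B_k$ as in \eqref{eqn:BksigmainCor}: the terms $\sum_{j\leqslant k}a_{\sigma(j)}$, $\sum_{j\leqslant k}b_{\tau(j)}$ and $-\sum_{j\leqslant k}\bsC_{jj}$ assemble the first sum in $B_k$, and the three $\frac kn$-corrections assemble $-\frac kn\sum_j(a_j+b_j-\bsC_{jj})$.

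Next I would change variables from $\bsx\in\real^n$ to $(\bar{x},u_1,\ldots,u_{n-1})$. Since $\Delta(\bsx)=\prod_{i<j}(x_i-x_j)$ is invariant under the common shift $x_i\mapsto x_i+t$, it depends only on the gaps through $x_i-x_j=\sum^{j-1}_{l=i}u_l$, so $\Delta(\bsx)^2=\widetilde\Delta(\bsu)^2$, where $\widetilde\Delta(\bsu)$ denotes the Vandermonde re-expressed in the gap variables. This linear change of variables has constant Jacobian; a direct computation (factoring out $\frac1n$ from the $\bar{x}$-row) shows that its absolute value is $1$, so $[\dif\bsx]=\dif\bar{x}\,[\dif\bsu]$ with no extra constant appearing.

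Finally I would carry out the $\bar{x}$-integration. Because the only remaining $\bar{x}$-dependence is the prefactor $\exp\Br{\mathrm{i}\bar{x}\sum_k(a_k+b_k-\bsC_{kk})}$, Fubini lets me pull it out and use $\int_{\real}\dif\bar{x}\,e^{\mathrm{i}\bar{x}s}=2\pi\delta(s)$ to produce the factor $2\pi\delta\Pa{\sum^n_{j=1}(a_j+b_j-\bsC_{jj})}$. What remains is $\sum_{\sigma,\tau\in S_n}\sign(\sigma)\sign(\tau)\int_{\real^{n-1}}\widetilde\Delta(\bsu)^{-2}\prod^{n-1}_{k=1}e^{\mathrm{i}u_kB_k(\sigma,\tau)}[\dif\bsu]$, and collapsing $\sign(\sigma)\sign(\tau)=\sign(\sigma\tau)$ yields precisely \eqref{eq:I(a,b:Cdiag)inCor2.3}.

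The main obstacle is the exponent bookkeeping in the first step: the three separate Abel-summation outputs must be aligned index by index so that every $\frac kn$-correction lands in the correct slot to reconstruct $B_k(\sigma,\tau)$, and the overall prefactors must collapse to a single $\bar{x}$-linear phase. A secondary subtlety is that the $\bar{x}$-integral converges only as a distribution, so the interchange of the (finite) permutation sums with the integration and the identification of the $\delta$-factor should be read in the tempered-distribution sense, exactly as is already implicit in the Fourier representation used in Theorem~\ref{thrm:main}.
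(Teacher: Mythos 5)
Your proposal is correct and follows essentially the same route as the paper's own proof: expanding both determinants via \eqref{eqn:CorProof_Det}, the product via \eqref{eqn:CorProof_Product}, passing to the variables $(\bar{x},u_1,\ldots,u_{n-1})$ with unit-modulus Jacobian so that $\Delta(\bsx)^2=\widetilde\Delta(\bsu)^2$, and integrating out $\bar{x}$ to produce the factor $2\pi\delta\Pa{\sum^n_{j=1}(a_j+b_j-\bsC_{jj})}$. The only additions beyond the paper's argument are your explicit remarks on the index-by-index assembly of $B_k(\sigma,\tau)$ and the tempered-distribution reading of the $\bar{x}$-integral, both of which are consistent with (and implicit in) the paper's treatment.
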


\begin{proof}
Note that \eqref{eq:q(Cdiag|a,b)reformula} is the reformulation of
Theorem \ref{thrm:main}. So we focus on the expression
$I(\bsa,\bsb:\bsC^\diag)$. From the formulae
\eqref{eqn:CorProof_Det} and \eqref{eqn:CorProof_Product} we have
\begin{eqnarray*}
\det\Pa{e^{\mathrm{i}x_ia_j}}&=&\exp\Br{\mathrm{i}\bar{x}\sum^n_{k=1}a_k}\sum_{\sigma\in
S_n} \sign(\sigma)
\prod^{n-1}_{k=1}\exp\Br{\mathrm{i}(x_k-x_{k+1})\Pa{\sum^k_{j=1}a_{\sigma(j)}
-\frac kn\sum^n_{j=1}a_j}}, \\
\det\Pa{e^{\mathrm{i}x_ib_j}}&=&\exp\Br{\mathrm{i}\bar{x}\sum^n_{k=1}b_k}\sum_{\tau\in
S_n} \sign(\tau)
\prod^{n-1}_{k=1}\exp\Br{\mathrm{i}(x_k-x_{k+1})\Pa{\sum^k_{j=1}b_{\tau(j)}
-\frac kn\sum^n_{j=1}b_j}}.
\end{eqnarray*}
and
\begin{eqnarray*}
\prod^n_{k=1}e^{-\mathrm{i}x_k\bsC_{kk}}=
\exp\Br{-\mathrm{i}\bar{x}\sum^n_{k=1}\bsC_{kk}}
\exp\Br{\mathrm{i}\sum^{n-1}_{k=1}(x_k-x_{k+1})\Pa{-\sum^k_{j=1}\bsC_{jj}
+ \frac kn\sum^k_{j=1}\bsC_{jj}}}.
\end{eqnarray*}
And it follows that
\begin{eqnarray*}
&&\det\Pa{e^{\mathrm{i}x_ia_j}}\det\Pa{e^{\mathrm{i}x_ib_j}}\prod^n_{k=1}e^{-\mathrm{i}x_k\bsC_{kk}}\\
&&=
\exp\Br{\mathrm{i}\bar{x}\sum^n_{j=1}(a_j+b_j-\bsC_{jj})}\sum_{\sigma,\tau\in
S_n} \sign(\sigma\tau)
\prod^{n-1}_{k=1}\exp\Br{\mathrm{i}(x_k-x_{k+1})B_k(\sigma,\tau)},
\end{eqnarray*}
where  $B_k(\sigma,\tau)$ is defined in \eqref{eqn:BksigmainCor}. We
next perform the  change of variables: $(x_1,\ldots,x_n)\to (\bar x,
u_1,\ldots,u_{n-1})$, where $u_k = x_k-x_{k+1}$. The Jacobian of
this transformation is given by
\begin{eqnarray}
J = \abs{\frac{\partial(\bar x,
u_1,\ldots,u_{n-1})}{\partial(x_1,\ldots,x_n)}}
=\abs{\begin{array}{ccccc}
                                                        \frac1n & \frac1n & \cdots & \frac1n & \frac1n \\
                                                        1 & -1 & \cdots & 0 & 0 \\
                                                        0 & 1 & \ddots & 0 & 0 \\
                                                        \vdots & \vdots & \ddots & -1 & 0 \\
                                                        0 & 0 & \cdots & 1 &
                                                        -1
                                                      \end{array}
}= (-1)^{n-1}.
\end{eqnarray}
Then $\dif x_1\wedge\cdots\wedge\dif
x_n=(-1)^{n-1}\dif\bar{x}\wedge\dif u_1\wedge\cdots\wedge\dif
u_{n-1}$, that is, this transformation is volume-preserving, $[\dif
\bsx]=\dif\bar{x}[\dif \bsu]$, where $[\dif
\bsu]=\prod^{n-1}_{j=1}\dif u_j$. Now we have already known
\cite{Hoskins2009} that
$$
\delta(s) = \frac1{2\pi}\int_\real e^{\mathrm{i}st}\dif t.
$$
This indicates that the first factor is given by
$$
\int_\real e^{\mathrm{i}\bar{x}\sum^n_{k=1}(a_k+b_k-c_k)}\dif\bar x
= 2\pi\delta\Pa{\sum^n_{k=1}(a_k+b_k-c_k)}.
$$
Finally, we have
\begin{eqnarray*}
I(\bsa,\bsb:\bsC^\diag) &=&
\int_{\real^n}\frac{[\dif\bsx]}{\Delta(\bsx)^2}\det\Pa{e^{\mathrm{i}x_ia_j}}\det\Pa{e^{\mathrm{i}x_ib_j}}\prod^n_{k=1}e^{-\mathrm{i}x_k\bsC_{kk}}\\
&=&\int_\real e^{\mathrm{i}\bar{x}\sum^n_{k=1}(a_k+b_k-c_k)}\dif\bar
x\sum_{\sigma,\tau\in
S_n}\sign(\sigma\tau)\int_{\real^{n-1}}\frac{[\dif\bsu]}{\widetilde\Delta(\bsu)^2}
\prod^{n-1}_{k=1}e^{\mathrm{i}u_kB_k(\sigma,\tau)}\\
&=&2\pi\delta\Pa{\sum^n_{j=1}(a_j+b_j-\bsC_{jj})}\sum_{\sigma,\tau\in
S_n}\sign(\sigma\tau)\int_{\real^{n-1}}\frac{[\dif\bsu]}{\widetilde\Delta(\bsu)^2}
\prod^{n-1}_{k=1}e^{\mathrm{i}u_kB_k(\sigma,\tau)},
\end{eqnarray*}
where
$$
\widetilde \Delta(\bsu) := \prod_{1\leqslant i\leqslant j-1\leqslant
n-1}(u_i+u_{i+1}+\cdots +u_{j-1}).
$$
This completes the proof.
\end{proof}

\begin{exam}
For the sum of two $2\times 2$ random Hermitian matrices, we can
derive concise expressions for the pdfs of diagonals of this sum.
For $n=2$, for $\bsa=(a_1,a_2)$ with $a_1\geqslant a_2$ and
$\bsb=(b_1,b_2)$ with $b_1\geqslant b_2$,  the formula
\eqref{eq:I(a,b:Cdiag)inCor2.3} can be simplifed as follows.
\begin{eqnarray}\label{eq:I(a,b:Cdiag)n=2}
I(\bsa,\bsb:\bsC^\diag)
=2\pi\delta\Pa{\sum^2_{j=1}(a_j+b_j-\bsC_{jj})}\sum_{\sigma,\tau\in
S_2}\sign(\sigma\tau)\int_{\real}\frac{\dif u}{u^2}
e^{\mathrm{i}uB(\sigma,\tau)},
\end{eqnarray}
where
\begin{eqnarray*}
B(\sigma,\tau):=a_{\sigma(1)}+b_{\tau(1)}-\bsC_{11} -\frac
12\sum^2_{j=1}(a_j+b_j-\bsC_{jj}).
\end{eqnarray*}
Next, we calculate the integral:
$$
\int^{+\infty}_{-\infty}\frac{e^{\mathrm{i}uB(\sigma,\tau)}}{u^2}
\dif u.
$$
We apply the following formula for Fourier transform
\cite{Hoskins2009}
\begin{eqnarray}
\int^{+\infty}_{-\infty}\frac{e^{-\mathrm{i} w t}}{t^n}\dif t =
-\mathrm{i}\pi\frac{(-\mathrm{i}w)^{n-1}}{\Gamma(n)}\mathrm{sgn}(w),
\end{eqnarray}
where
$$
\mathrm{sgn}(w) = \begin{cases}1,&\text{if }w>0,\\0,&\text{if }
w=0,\\-1,&\text{if }w<0.\end{cases}
$$
We can see that
\begin{eqnarray}
\int^{+\infty}_{-\infty}\frac{e^{-\mathrm{i}\nu t}}{t^2}\dif t =
-\pi\abs{\nu},
\end{eqnarray}
and thus,
\begin{eqnarray}
\int^{+\infty}_{-\infty}\frac{e^{\mathrm{i}uB(\sigma)}}{u^2} \dif u
= -\pi \abs{B(\sigma,\tau)}.
\end{eqnarray}
From the above, we see that
\begin{eqnarray}
\sum_{\sigma,\tau\in S_2}\sign(\sigma\tau)\int_{\real}\frac{\dif
u}{u^2} e^{\mathrm{i}uB(\sigma,\tau)} = -\pi \sum_{\sigma,\tau\in
S_2}\sign(\sigma\tau)\abs{B(\sigma,\tau)}.
\end{eqnarray}
Define $\alpha_{12}:=a_1-a_2\geqslant0$, $\beta_{12}
:=b_1-b_2\geqslant0$, and $\tilde\gamma_{12}:=\bsC_{11}-\bsC_{22}$.
Then \eqref{eq:I(a,b:Cdiag)n=2} is rewritten as
\begin{eqnarray*}
I(\bsa,\bsb:\bsC^\diag) &=&
2\pi\delta\Pa{\sum^2_{j=1}(a_j+b_j-\bsC_{jj})}\\
&&\times\Pa{-\frac\pi2\sum_{\sigma,\tau\in
S_2}\sign(\sigma\tau)\abs{\sign(\sigma)\alpha_{12} +
\sign(\tau)\beta_{12} - \tilde \gamma_{12}}}.
\end{eqnarray*}
That is,
\begin{eqnarray*}
I(\bsa,\bsb:\bsC^\diag)
&=&\pi^2\delta\Pa{\sum^2_{j=1}(a_j+b_j-\bsC_{jj})} (\abs{\alpha_{12}
- \beta_{12} - \tilde \gamma_{12}} + \abs{\alpha_{12} - \beta_{12} +
\tilde \gamma_{12}}\\
&& - \abs{\alpha_{12} + \beta_{12} - \tilde \gamma_{12}} -
\abs{\alpha_{12} + \beta_{12} + \tilde \gamma_{12}}).
\end{eqnarray*}
Therefore
\begin{eqnarray} \label{eqn:q(c|ab)4lowerdim}
q(\bsC^\diag|\bsa,\bsb)
&=&\frac1{4\alpha_{12}\beta_{12}}\delta\Pa{\sum^2_{j=1}(a_j+b_j-\bsC_{jj})}(\abs{\alpha_{12}
+ \beta_{12} -
\tilde \gamma_{12}} + \abs{\alpha_{12} + \beta_{12} + \tilde \gamma_{12}}\notag\\
&&- \abs{\alpha_{12} - \beta_{12} - \tilde \gamma_{12}} -
\abs{\alpha_{12} - \beta_{12} + \tilde \gamma_{12}}).
\end{eqnarray}

\end{exam}

\section{The pdf of eigenvalues of the sum of random Hermitian matrices}\label{sect:der-pin}

The derivative principle is formally put forward in
\cite{Christandl2014}. The authors of \cite{Christandl2014} derived
this result in the abstract level, i.e., in the regime of Lie
algebra, and they used this result to obtain the distribution of
eigenvalues of random marginals of a multipartite random pure state.
Later, Mej\'{i}a, Zapata, and Botero rederived this result in Random
Matrix Theory (RMT) \cite{Mejia2017}, and they used this result to
study the difference between two random mixed quantum states. The
following version of the derivative principle is from
\cite{Mejia2017}.

\begin{prop}[The derivative principle]\label{prop:der-princ}
Let $\bsZ$ be an $n\times n$ random matrix drawn from a unitarily
invariant random matrix ensemble, $p_{\bsZ}$ the joint eigenvalue
distribution for $\bsZ$ and $q_{\bsZ}$ the joint distribution of the
diagonal elements of $\bsZ$. Then
\begin{eqnarray}\label{eq:derivative-principle}
p_{\bsZ}(\lambda) =
\frac1{\prod^n_{k=1}k!}(-1)^{\binom{n}{2}}\Delta(\lambda)\Delta(\partial_\lambda)
q_{\bsZ}(\lambda),
\end{eqnarray}
where $\Delta(\lambda) = \prod_{i<j}(\lambda_i-\lambda_j)$ is the
Vandermonde determinant and $\Delta(\partial_\lambda)$ the
differential operator
$\prod_{i<j}\Pa{\frac{\partial}{\partial_{\lambda_i}} -
\frac{\partial}{\partial_{\lambda_j}}}$.
\end{prop}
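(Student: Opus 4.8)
The plan is to pass to characteristic functions and exploit the \emph{Harish-Chandra–Itzykson–Zuber} (HCIZ) integral that was already used in the proof of Theorem~\ref{thrm:main}. Write the characteristic function of the diagonal as $\widehat q(\bst)=\expect\Br{e^{\mathrm{i}\sum_k t_k\bsZ_{kk}}}=\expect\Br{e^{\mathrm{i}\Tr{\diag(\bst)\bsZ}}}$. Since the ensemble is unitarily invariant, conditioning on the spectrum $\lambda$ lets us write $\bsZ=\bsU\diag(\lambda)\bsU^\dagger$ with $\bsU$ Haar-distributed, so that
\[
\widehat q(\bst)=\int_{\real^n}p_{\bsZ}(\lambda)\Pa{\int_{\rU(n)}e^{\mathrm{i}\Tr{\diag(\bst)\bsU\diag(\lambda)\bsU^\dagger}}\dif\mu_{\mathrm{Haar}}(\bsU)}\dif\lambda .
\]
Applying HCIZ exactly as in Theorem~\ref{thrm:main} turns the inner integral into $\Pa{\prod_{k=1}^n\Gamma(k)}\det\Pa{e^{\mathrm{i}t_i\lambda_j}}/(\Delta(\mathrm{i}\bst)\Delta(\lambda))$, whence, writing $C_n:=\prod_{k=1}^n\Gamma(k)$, we get $\widehat q(\bst)=\frac{C_n}{\Delta(\mathrm{i}\bst)}\int p_{\bsZ}(\lambda)\frac{\det(e^{\mathrm{i}t_i\lambda_j})}{\Delta(\lambda)}\dif\lambda$.

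The next step is to collapse the determinant against the symmetric density $p_{\bsZ}$. Expanding $\det\Pa{e^{\mathrm{i}t_i\lambda_j}}=\sum_{\sigma\in S_n}\sign(\sigma)e^{\mathrm{i}\sum_k t_{\sigma(k)}\lambda_k}$ and using that $p_{\bsZ}$ is invariant under permutations of $\lambda$ while $\Delta(\lambda)$ is antisymmetric, the two signs $\sign(\sigma)$ cancel after relabelling the integration variables, so each of the $n!$ terms contributes identically:
\[
\int p_{\bsZ}(\lambda)\frac{\det\Pa{e^{\mathrm{i}t_i\lambda_j}}}{\Delta(\lambda)}\dif\lambda=n!\int\frac{p_{\bsZ}(\lambda)}{\Delta(\lambda)}e^{\mathrm{i}\bst\cdot\lambda}\dif\lambda=n!\,\widehat f(\bst),\qquad f:=\frac{p_{\bsZ}}{\Delta}.
\]
This gives the clean relation $\widehat f(\bst)=\frac{\Delta(\mathrm{i}\bst)}{C_n\,n!}\widehat q(\bst)$: on the Fourier side the two densities differ only by the polynomial factor $\Delta(\mathrm{i}\bst)$ and a constant.

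Finally I would invert the Fourier transform. Since $\Delta(\mathrm{i}\bst)=\mathrm{i}^{\binom n2}\Delta(\bst)$ and $\partial_{\lambda_j}e^{-\mathrm{i}\bst\cdot\lambda}=-\mathrm{i}t_j e^{-\mathrm{i}\bst\cdot\lambda}$, multiplication by $\Delta(\mathrm{i}\bst)$ under the inverse transform becomes the operator $\Delta(\partial_\lambda)$ up to the phase $\mathrm{i}^{\binom n2}/(-\mathrm{i})^{\binom n2}=(-1)^{\binom n2}$; carrying this out yields $f(\lambda)=\frac{(-1)^{\binom n2}}{C_n\,n!}\Delta(\partial_\lambda)q_{\bsZ}(\lambda)$, and multiplying back by $\Delta(\lambda)$ gives the asserted $p_{\bsZ}=\frac{(-1)^{\binom n2}}{C_n\,n!}\Delta(\lambda)\Delta(\partial_\lambda)q_{\bsZ}$. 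The constant matches because $C_n\,n!=\Pa{\prod_{k=1}^{n-1}k!}n!=\prod_{k=1}^n k!$. The step I expect to require the most care is the rigorous justification of these Fourier manipulations: the factor $1/\Delta(\lambda)$ in $f$ is singular on the diagonals, so $\widehat f$ must be read in the sense of tempered distributions, and both the symmetrization and the passage from multiplication by $\Delta(\mathrm{i}\bst)$ to $\Delta(\partial_\lambda)$ should be verified by pairing against smooth symmetric test functions of $\lambda$. The HCIZ formula is the essential analytic input; everything else is bookkeeping of the signs, the phase $\mathrm{i}^{\binom n2}$, and the normalizing constants.
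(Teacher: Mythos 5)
Your proposal is correct, but note that the paper itself never proves Proposition~\ref{prop:der-princ}: it imports the statement from the literature, crediting \cite{Christandl2014} (who derive it Lie-theoretically, via Duistermaat--Heckman measures on coadjoint orbits) and quoting the RMT formulation of \cite{Mejia2017}. So what you have written is not an alternative to an in-paper argument but a reconstruction of the missing proof, and it is the natural one given the paper's own toolkit: the Harish-Chandra--Itzykson--Zuber formula of Theorem~\ref{th:triple-int} plus the Fourier manipulations of Theorem~\ref{thrm:main}. I checked the three key steps and they are sound: (i) conditioning on the spectrum and applying HCIZ with one argument $\mathrm{i}\bst$ gives $\widehat q(\bst)=C_n\int p_{\bsZ}(\lambda)\det\Pa{e^{\mathrm{i}t_i\lambda_j}}/\Pa{\Delta(\mathrm{i}\bst)\Delta(\lambda)}\dif\lambda$; (ii) the symmetrization collapse is right, since the relabelling $\lambda_j=\mu_{\sigma(j)}$ turns $\Delta(\lambda)$ into $\sign(\sigma)\Delta(\mu)$ and the two signs cancel, yielding $n!\,\widehat f$ with $f=p_{\bsZ}/\Delta$; (iii) the phase bookkeeping $\Delta(\mathrm{i}\bst)\leftrightarrow(-1)^{\binom{n}{2}}\Delta(\partial_\lambda)$ and the identity $C_n\,n!=\prod_{k=1}^n k!$ are both correct. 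Two refinements would tighten it. First, your collapse in (ii) silently assumes $p_{\bsZ}$ is the \emph{exchangeable} (symmetric) eigenvalue density on $\real^n$ rather than the density of ordered eigenvalues; that is indeed the convention of this paper and of \cite{Mejia2017}, but it should be said explicitly, since the constant changes by $n!$ otherwise. Second, rather than dividing by $\Delta(\mathrm{i}\bst)$, which vanishes on the hyperplanes $t_i=t_j$, record the relation as the polynomial identity $\Delta(\mathrm{i}\bst)\,\widehat q(\bst)=C_n\,n!\,\widehat f(\bst)$ and invert that; this makes the tempered-distribution reading you call for unproblematic and, importantly, covers the singular case the paper actually needs in Theorem~\ref{thrm:CorollaryDP}, where $\bsZ$ lies on a unitary orbit, $p_{\bsZ}$ is atomic, and $f$ is a signed measure rather than a function.
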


\subsection{A new derivation of the pdf by the derivative principle}

With the derivative principle, we can relate the pdf of the
eigenvalues of the sum of two random Hermitian matrices with given
spectra to the pdf of diagonals of this sum.

In the following, we will use the derivative principle to rederive
the pdf of eigenvalues of sum of two random Hermitian matrices with
given spectra.
\begin{thrm}\label{thrm:CorollaryDP}
Assume that two random Hermitian matrices $\bsA$ and $\bsB$ chosen
uniformly on the unitary orbits $\cU(\bsa)$ and $\cU(\bsb)$,
respectively, then the joint pdf $p(\bsc|\bsa,\bsb)$ of the eigenvalues
$\bsc$ of the sum $\bsC=\bsA+\bsB$ is given by derivative principle
\eqref{eq:derivative-principle}
\begin{eqnarray}
p(\bsc|\bsa,\bsb) =
\frac1{\prod^n_{k=1}k!}(-1)^{\binom{n}{2}}\Delta(\bsc)\Delta(\partial_{\bsc})
q(\bsc|\bsa,\bsb).
\end{eqnarray}
where $q(\bsc|\bsa,\bsb)$ is from \eqref{eq:q(Cdiag|a,b)reformula}
by replacing $\bsC^\diag$ as $\bsc$. Moreover,
\begin{eqnarray} \label{eqn:q(c|ab)inThm}
p(\bsc|\bsa,\bsb)
=\frac{\prod^n_{k=1}\Gamma(k)}{(2\pi)^nn!\mathrm{i}^{\binom{n}{2}}}
\frac{\Delta(\bsc)}{\Delta(\bsa)\Delta(\bsb)}\int_{\real^n}\frac{[\dif
\bsx]}{\Delta(\bsx)}
\det\Pa{e^{\mathrm{i}x_ia_j}}\det\Pa{e^{\mathrm{i}x_ib_j}}\prod^n_{k=1}e^{-\mathrm{i}x_kc_k}.
\end{eqnarray}
\end{thrm}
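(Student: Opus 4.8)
The plan is to treat the statement in two parts. The first displayed identity is nothing but a direct instantiation of the derivative principle (Proposition~\ref{prop:der-princ}): taking $\bsZ=\bsC=\bsA+\bsB$, which is drawn from a unitarily invariant ensemble once $\bsA,\bsB$ are averaged independently over their orbits (conjugating $\bsC$ by a fixed unitary preserves the law, since each orbit is invariant), its diagonal distribution $q_{\bsZ}$ is precisely the $q(\bsc|\bsa,\bsb)$ supplied by Theorem~\ref{thrm:main} (equivalently \eqref{eq:q(Cdiag|a,b)reformula}), and \eqref{eq:derivative-principle} then yields the claimed formula verbatim. So all the actual work lies in evaluating $\Delta(\bsc)\Delta(\partial_{\bsc})q(\bsc|\bsa,\bsb)$ to reach \eqref{eqn:q(c|ab)inThm}.

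For that, I would start from the integral representation of Theorem~\ref{thrm:main}, written with $\bsc$ in place of $\bsC^\diag$:
\begin{eqnarray*}
q(\bsc|\bsa,\bsb)=\frac{M}{\Delta(\bsa)\Delta(\bsb)}\int_{\real^n}[\dif\bsx]\frac{\det\Pa{e^{\mathrm{i}x_ia_j}}\det\Pa{e^{\mathrm{i}x_ib_j}}}{\Delta(\bsx)^2}\prod^n_{k=1}e^{-\mathrm{i}x_kc_k}.
\end{eqnarray*}
The crucial observation is that the entire dependence on $\bsc$ sits in the factor $\prod^n_{k=1}e^{-\mathrm{i}x_kc_k}=\exp\Pa{-\mathrm{i}\sum_k x_kc_k}$, an eigenfunction of every $\partial_{c_i}$ with eigenvalue $-\mathrm{i}x_i$. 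Hence each factor $\Pa{\partial_{c_i}-\partial_{c_j}}$ of $\Delta(\partial_{\bsc})$ acts as multiplication by $-\mathrm{i}(x_i-x_j)$, and applying the full operator under the integral sign multiplies the integrand by the scalar $(-\mathrm{i})^{\binom{n}{2}}\Delta(\bsx)$. This single power of $\Delta(\bsx)$ cancels one of the two powers in the denominator, turning $\Delta(\bsx)^{-2}$ into $\Delta(\bsx)^{-1}$ and producing exactly the integrand appearing in \eqref{eqn:q(c|ab)inThm}.

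It then remains to collect the prefactors. Multiplying by $\frac1{\prod_k k!}(-1)^{\binom{n}{2}}\Delta(\bsc)$ as dictated by the derivative principle, I would simplify the resulting constant $\frac1{\prod_k k!}(-1)^{\binom{n}{2}}M(-\mathrm{i})^{\binom{n}{2}}$ using three elementary identities: $\mathrm{i}^{n(n-1)}=(-1)^{\binom{n}{2}}$, so the sign hidden in $M$ cancels against $(-1)^{\binom{n}{2}}$; $-\mathrm{i}=\mathrm{i}^{-1}$, converting $(-\mathrm{i})^{\binom{n}{2}}$ into $\mathrm{i}^{-\binom{n}{2}}$; and $\prod^n_{k=1}k!=n!\prod^n_{k=1}\Gamma(k)$, which reduces the ratio of factorials. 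Together these collapse the prefactor to $\frac{\prod_k\Gamma(k)}{(2\pi)^n n!\,\mathrm{i}^{\binom{n}{2}}}$, completing \eqref{eqn:q(c|ab)inThm}. I expect no genuine conceptual obstacle: the differentiation is immediate precisely because $\bsc$ enters only through an exponential, so the two points requiring care are the justification of differentiating under the integral sign (the integrand being polynomial-times-oscillatory, the manipulation is understood in the distributional sense already implicit in Theorem~\ref{thrm:main}) and the bookkeeping of the constant, which is the most error-prone step.
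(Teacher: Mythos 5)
Your proposal is correct and follows essentially the same route as the paper's own proof: differentiating under the integral, observing that the exponential factor $\prod_k e^{-\mathrm{i}x_kc_k}$ turns $\Delta(\partial_{\bsc})$ into multiplication by $(-\mathrm{i})^{\binom{n}{2}}\Delta(\bsx)$ (cancelling one power of $\Delta(\bsx)$), and collecting constants via $\mathrm{i}^{n(n-1)}=(-1)^{\binom{n}{2}}$ and $\prod^n_{k=1}k!=n!\prod^n_{k=1}\Gamma(k)$. Your added remarks on the unitary invariance needed to invoke Proposition~\ref{prop:der-princ} and on the distributional sense of differentiation under the integral are sensible refinements of points the paper leaves implicit, but the argument is the same.
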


\begin{proof}
Note that
\begin{eqnarray*}
\Pa{\frac{\partial}{\partial_{c_i}} -
\frac{\partial}{\partial_{c_j}}}q(\bsc|\bsa,\bsb) =
\frac{M}{\Delta(\bsa)\Delta(\bsb)} \int\det\Pa{e^{\mathrm{i}x_ia_j}}
\det\Pa{e^{\mathrm{i}x_ib_j}}\Delta(\bsx)^{-2}\Pa{\frac{\partial}{\partial_{c_i}}
-
\frac{\partial}{\partial_{c_j}}}\prod^n_{k=1}e^{-\mathrm{i}x_kc_k}\dif
x_k,
\end{eqnarray*}
where
\begin{eqnarray*}
\Pa{\frac{\partial}{\partial_{c_i}} -
\frac{\partial}{\partial_{c_j}}}\prod^n_{k=1}e^{-\mathrm{i}x_kc_k} =
(-\mathrm{i})(x_i-x_j)\prod^n_{k=1}e^{-\mathrm{i}x_kc_k}.
\end{eqnarray*}
Thus,
\begin{eqnarray*}
\Delta(\partial_{\bsc})\prod^n_{k=1}e^{-\mathrm{i}x_kc_k} &=&
\prod_{i<j}(-\mathrm{i})(x_i-x_j)\prod^n_{k=1}e^{-\mathrm{i}x_kc_k}\\
&=&\frac1{\mathrm{i}^{\binom{n}{2}}}\Delta(\bsx)\prod^n_{k=1}e^{-\mathrm{i}x_kc_k}.
\end{eqnarray*}
Therefore,
\begin{eqnarray*}
\Delta(\partial_{\bsc})q(\bsc|\bsa,\bsb) =
\frac1{\mathrm{i}^{\binom{n}{2}}}\frac{M}{\Delta(\bsa)\Delta(\bsb)}
\int\frac{[\dif \bsx]}{\Delta(\bsx)}\det\Pa{e^{\mathrm{i}x_ia_j}}
\det\Pa{e^{\mathrm{i}x_ib_j}}\prod^n_{k=1}e^{-\mathrm{i}x_kc_k}.
\end{eqnarray*}
 Substituting it into the right hand side of \eqref{eq:derivative-principle}, we have
\begin{eqnarray*}
&&\frac1{\prod^n_{k=1}k!}(-1)^{\binom{n}{2}}\Delta(\bsc)\Delta(\partial_{\bsc})q(\bsc|\bsa,\bsb)
\\
&&=\frac{(-1)^{\binom{n}{2}}}{n!\prod^n_{k=1}\Gamma(k)}
\frac{M}{\mathrm{i}^{\binom{n}{2}}}\frac{\Delta(\bsc)}{\Delta(\bsa)\Delta(\bsb)}
\int\frac{[\dif \bsx]}{\Delta(\bsx)}\det\Pa{e^{\mathrm{i}x_ia_j}}
\det\Pa{e^{\mathrm{i}x_ib_j}}\prod^n_{k=1}e^{-\mathrm{i}x_kc_k}\\
&&=
\frac{\prod^n_{k=1}\Gamma(k)}{(2\pi)^nn!\mathrm{i}^{\binom{n}{2}}}\frac{\Delta(\bsc)}{\Delta(\bsa)\Delta(\bsb)}
\int\frac{[\dif \bsx]}{\Delta(\bsx)}\det\Pa{e^{\mathrm{i}x_ia_j}}
\det\Pa{e^{\mathrm{i}x_ib_j}}\prod^n_{k=1}e^{-\mathrm{i}x_kc_k}.
\end{eqnarray*}
This is exactly   the application of   derivative principle,  and it
yields the joint eigenvalue distribution.
\end{proof}

\begin{remark}
Denote by $\lambda(\bsX)$ the vector whose components consisting of
eigenvalues in a non-increasing order of Hermitian matrix $\bsX$.
Given two $2\times 2$ Hermitian matrices $\bsA$ and $\bsB$ with
$\lambda(\bsA)=(a_1,a_2):=\bsa$ and $\lambda(\bsB)=(b_1,b_2)=\bsb$.
We also denote $\bsC=\bsA+\bsB$ and $\lambda(\bsC)=(c_1,c_2):=\bsc$.
Denote $I=(\abs{\alpha_{12}-\beta_{12}}, \alpha_{12}+\beta_{12})$,
where $\alpha_{12}=a_1-a_2\geqslant0$ and
$\beta_{12}=b_1-b_2\geqslant0$. The pdf of $\bsC=\bsU\widehat
\bsa\bsU^\dagger+\bsV\widehat \bsb\bsV^\dagger$ is given as
\cite{JBZ2017}
\begin{eqnarray}\label{eq:sum-orbits}
p(\bsc|\bsa,\bsb) &=&
\frac{c_1-c_2}{2(a_1-a_2)(b_1-b_2)}(1_{I}(c_1-c_2)-1_{-I}(c_1-c_2))\notag\\
&&\times\delta\Pa{c_1+c_2-a_1-a_2-b_1-b_2},
\end{eqnarray}
which can be also derived from \eqref{eqn:q(c|ab)4lowerdim} and the
derivative principle.
\end{remark}

\subsection{The pdf of eigenvalues of the sum of random Hermitian matrices from GUE ensemble}\label{sect:sum-of-tworan}

Recall that the standard complex normal random variable or standard
complex Gaussian random variable is a complex random variable $z$
whose real and imaginary parts are independent normally distributed
random variables with mean zero and variance $\frac12$. We use the
notation $z\sim N_\complex(0,1)=N(0,\frac12)+\sqrt{-1}N(0,\frac12)$
to denote the fact that $z$ is the standard complex normal random
variable.

The so-called $\rG\rU\rE(n)$ ensemble is the class of complex
Hermitian random matrices $\bsA=(a_{ij})_{n\times n}$, generated in
the following way:
\begin{eqnarray}\label{eq:half-sum}
\bsA=\frac12(\bsZ+\bsZ^\dagger),
\end{eqnarray}
where $\bsZ=(z_{ij})_{n\times n}$ is the standard complex Gaussian
random matrix, i.e., $z_{ij}\sim N_\complex(0,1)$ are i.i.d. for all
$i,j$. From this, we see that
$$
a_{ij} = \frac12(z_{ij}+\bar z_{ji}).
$$
If $i=j$, then $a_{ii}=\re(z_{ii})\sim N(0,\frac12)$; if $i\neq j$,
then $a_{ij}\sim N_\complex(0,\frac12)$. Moreover, the density
functions of $a_{ij}$ are given by
\begin{eqnarray}
p(a_{ij}) = \begin{cases} \frac1{\sqrt{\pi}}e^{-a^2_{ii}},&\text{if
}i=j; \\
p(a_{ij}) = \frac2\pi e^{-2\abs{a_{ij}}^2},&\text{if }i\neq j.
\end{cases}
\end{eqnarray}
Therefore the pdf of a random matrix $\bsA\in\rG\rU\rE(n)$ is given
by the following:
\begin{eqnarray}\label{eq:gue}
p(\bsA) =\frac{2^{\binom{n}{2}}}{\pi^{\frac{n^2}2}}
\exp\Pa{-\Tr{\bsA^2}}.
\end{eqnarray}
As already known \cite{Andrews1999},
\begin{eqnarray}
\int_{\real^n} \Delta(\bsx)^2\exp\Pa{-\inner{\bsx}{\bsx}}[\dif
\bsx]=(2\pi)^{\frac n2}2^{-\frac{n^2}2}\prod^n_{k=1}k! .
\end{eqnarray}

The pdf of eigenvalues $\bsa=(a_1,a_2,\ldots,a_n)$ of
$\bsA=\bsU\widehat \bsa\bsU^\dagger$ is given by
\begin{eqnarray}
\frac{2^{\binom{n}{2}}}{\pi^{\frac
n2}\prod^n_{k=1}k!}\Delta(\bsa)^2\exp\Pa{-\inner{\bsa}{\bsa}},\quad
\bsa\in\real^n.
\end{eqnarray}
Here $\Inner{\cdot}{\cdot}$ is the Euclid inner product.

\begin{prop}\label{th:sum-of-Kgue}
The pdf of $\bss=(s_1,\ldots,s_n)$ eigenvalues of the sum
$\bsS=\sum^K_{k=1}\bsA_k$, where all $\bsA_k\in\rG\rU\rE(n)$ are
i.i.d., is given by
\begin{eqnarray}
p_{\bsS,K}(\bss) = \frac{\Pa{\frac2K}^{\binom{n}{2}}}{(K\pi)^{\frac
n2}\prod^n_{j=1}j!}\Delta(\bss)^2\exp\Pa{-\frac1K\inner{\bss}{\bss}},\quad\bss\in\real^n.
\end{eqnarray}
\end{prop}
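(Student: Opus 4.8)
The plan is to exploit the \emph{reproductive} property of the Gaussian unitary ensemble under addition: the sum of independent $\rG\rU\rE(n)$ matrices is again a Gaussian ensemble, merely rescaled. Once this is in hand, the eigenvalue density of $\bsS$ follows from the single-matrix eigenvalue density recalled just above by a pure change of variables, with no new integration required.

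First I would record the distributional identity $\bsS\stackrel{\mathrm d}{=}\sqrt K\,\bsA$ with $\bsA\in\rG\rU\rE(n)$. In the real coordinates $\{a_{ii}\}\cup\{\re a_{ij},\im a_{ij}\}_{i<j}$ the entries of a $\rG\rU\rE(n)$ matrix are independent centred Gaussians, and $\Tr{\bsA^2}=\sum_i a_{ii}^2+2\sum_{i<j}\abs{a_{ij}}^2$ is exactly the associated Euclidean quadratic form. Summing $K$ independent copies multiplies every variance by $K$, so the density of $\bsS=\sum_{k=1}^K\bsA_k$ is proportional to $\exp\Pa{-\tfrac1K\Tr{\bsS^2}}$; equivalently $\bsS/\sqrt K$ has the law of a single $\rG\rU\rE(n)$ matrix. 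This is the only genuinely probabilistic step, and everything after it is bookkeeping.

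Next I would transport this identity to the spectrum. Since $\bsA\mapsto\sqrt K\,\bsA$ scales every eigenvalue by $\sqrt K$, the eigenvalue vector of $\bsS$ obeys $\bss\stackrel{\mathrm d}{=}\sqrt K\,\bsa$, where $\bsa$ follows the single-$\rG\rU\rE(n)$ eigenvalue density $\tfrac{2^{\binom n2}}{\pi^{n/2}\prod_k k!}\Delta(\bsa)^2\exp\Pa{-\inner{\bsa}{\bsa}}$. Applying the substitution $\bsa=\bss/\sqrt K$ with $[\dif\bsa]=K^{-n/2}[\dif\bss]$, together with the homogeneities $\Delta\Pa{\bss/\sqrt K}^2=K^{-\binom n2}\Delta(\bss)^2$ and $\inner{\bss/\sqrt K}{\bss/\sqrt K}=\tfrac1K\inner{\bss}{\bss}$, the prefactors $2^{\binom n2}$, $K^{-\binom n2}$ and $K^{-n/2}$ reorganize into $\tfrac{(2/K)^{\binom n2}}{(K\pi)^{n/2}\prod_k k!}$, which is precisely the claimed constant.

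I expect the only obstacle to be keeping the scaling exponents straight, since the Vandermonde carries $\binom n2$ factors while the Gaussian measure contributes the separate power $n/2$. As an independent check one may instead fix the constant by imposing $\int_{\real^n}p_{\bsS,K}(\bss)[\dif\bss]=1$: rescaling $\bss=\sqrt K\,\bsx$ and invoking the stated integral $\int_{\real^n}\Delta(\bsx)^2\exp\Pa{-\inner{\bsx}{\bsx}}[\dif\bsx]=(2\pi)^{n/2}2^{-n^2/2}\prod_k k!$, and using $n^2/2-n/2=\binom n2$, returns the same constant. A third route, more in the spirit of this paper, is to feed the Gaussian diagonal distribution of $\bsS$ into the derivative principle of Proposition~\ref{prop:der-princ}, but the scaling argument above is by far the most economical.
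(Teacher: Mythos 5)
Your proof is correct, but it takes a genuinely different route from the paper's. You exploit Gaussian stability: $\bsS\stackrel{\mathrm{d}}{=}\sqrt{K}\,\bsA$ for a single $\bsA\in\rG\rU\rE(n)$, and then transport the known one-matrix eigenvalue density through the scaling $\bss=\sqrt{K}\,\bsa$; your exponent bookkeeping ($K^{-\binom{n}{2}}$ from the squared Vandermonde, $K^{-n/2}$ from the Jacobian, recombining into $(2/K)^{\binom{n}{2}}(K\pi)^{-n/2}$) is exactly right, and your normalization cross-check via $\binom{n}{2}=\tfrac{n^2}{2}-\tfrac n2$ confirms the constant. The paper does something else on purpose: it computes the joint law of the \emph{diagonal} entries of $\bsS$ (each diagonal entry is a convolution of $K$ independent $N(0,\tfrac12)$ variables, hence Gaussian of variance $\tfrac K2$), and then feeds this into the derivative principle of Proposition~\ref{prop:der-princ} to produce the eigenvalue density --- precisely the ``third route'' you mention and set aside. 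The paper's own Remark concedes that the result ``can be derived directly from the definitions'' of the ensemble, which is essentially your argument; the point of the paper's proof is to showcase the derivative principle. What each approach buys: yours is the most economical and requires no machinery beyond a linear change of variables, but it hinges on the ensemble being closed under rescaling, a feature special to Gaussians; the paper's diagonal-plus-derivative-principle method is uniform across unitarily invariant ensembles and is the same template it reuses for the Wishart case (Proposition~\ref{th:sum-of-Kwishart}), where the sum of $K$ matrices is not a rescaled copy of a single ensemble member and only the convolution of the diagonal laws remains elementary (there the derivative principle also silently supplies the Laguerre-type determinant identity that your scaling argument avoids in the Gaussian case).
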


\begin{proof}
As an illustration, we show the corresponding result for $K=2$ and
$\bsS=\bsA_1+\bsA_2$. The proof for an arbitrary positive integer
$K$ goes similarly. Firstly we work out the pdf of the diagonal part
of $\bsW$. In fact,
$$
\bsS^\diag = \diag(a^{(1)}_{11}+a^{(2)}_{11},
a^{(1)}_{22}+a^{(2)}_{22},\ldots,a^{(1)}_{nn}+a^{(2)}_{nn}),
$$
where $a^{(1)}_{ii},a^{(2)}_{jj}\sim N(0,\frac12)$ are i.i.d.
Gaussian random variables. Then the pdf $q\Pa{\bsS^\diag}$ of
$\bsS^\diag$ is given
$$
q\Pa{\bsS^\diag} = \prod^n_{i=1}q(S_{ii}),
$$
where
$$
q(S_{ii})=(\varphi\star \varphi)(S_{ii}) =
\frac1{\sqrt{2\pi}}\exp\Pa{-\frac12S^2_{ii}}
$$
for $i=1,\cdots, n$. Here
$\varphi(x) = \frac1{\sqrt{\pi}}e^{-x^2}$ and $\star$ means the
convolution product. Thus
\begin{eqnarray}
q\Pa{\bsS^\diag} =
\prod^n_{i=1}\frac1{\sqrt{2\pi}}\exp\Pa{-\frac12S^2_{ii}} =
\Pa{\frac1{2\pi}}^{\frac n2}\exp\Pa{-\frac12\sum^n_{i=1}S^2_{ii}}.
\end{eqnarray}
By using Proposition~\ref{prop:der-princ}, we have
\begin{eqnarray}
p_{\bsS}(\bss) &=& p_{\bsA_1+\bsA_2}(\bss)
=\frac1{\prod^n_{k=1}k!}(-1)^{\binom{n}{2}}\Delta(\bss)\Delta(\partial_{\bss})q(\bss)\\
&=&\frac1{(2\pi)^{\frac
n2}\prod^n_{k=1}k!}\Delta(\bss)^2\exp\Pa{-\frac12\inner{\bss}{\bss}}.
\end{eqnarray}
This completes the proof.
\end{proof}
Note that this same result can be derived as a direct corollary of
the \eqref{eq:half-sum} in the section. The above working amply
demonstrate the consistency.

\subsection{The connection with Golden-Thompson
inequality}\label{sect:GTinequality}

Let $\dif\mu(\bsH) =p(\bsH)[\dif \bsH]$, where $p(\bsH)$ the pdf
which is given by \eqref{eq:gue}. Denote
\begin{eqnarray}\label{eq:expect}
\mathbb{E}_{\bsH\in \rG\rU\rE(n)}\Br{f(\bsH)} =\int_{\rG\rU\rE(n)}
f(\bsH)\dif \mu(\bsH),
\end{eqnarray}
where $f(\bsH)$ is the functional calculus of the function $f$. By
the spectral decomposition of $\bsH$: $\bsH=\bsU\Lambda
\bsU^\dagger$, where $\Lambda=\diag(\lambda_1,\ldots,\lambda_n)$
with $+\infty>\lambda_1>\cdots>\lambda_n>-\infty$, and
$\bsU\in\rU(n)/\mathbb{T}$. Here $\mathbb{T}$ is the maximal torus
of $\rU(n)$, and for $f$ in \eqref{eq:expect} matrix valued
$f(\bsH)=\bsU f(\Lambda)\bsU^\dagger$, where
$f(\Lambda)=\diag(f(\lambda_1),\ldots,f(\lambda_n))$. Then we have
\cite{Zhang2018}
\begin{eqnarray}\label{eq:LV}
[\dif \bsH] = \frac{\vol(\rU(n))}{(2\pi)^n}\Delta(\lambda)^2[\dif
\Lambda]\dif\nu(\bsU),
\end{eqnarray}
where
\begin{eqnarray*}
\vol(\rU(n)) =
\frac{2^n\pi^{\binom{n+1}{2}}}{\prod^n_{k=1}\Gamma(k)}
\end{eqnarray*}
is the volume associated with the Haar measure on $\rU(n)$ and
$\dif\nu(\bsU)$ is the normalized Haar measure over $\rU(n)$ in the
sense that
\begin{eqnarray*}
\int_{\rU(n)} \dif\nu(\bsU) = 1.
\end{eqnarray*}
We get that
\begin{eqnarray}
\dif\mu(\bsH) = \frac{2^{\binom{n}{2}}}{\pi^{\frac
n2}\prod^n_{k=1}\Gamma(k)}
\Delta(\lambda)^2\exp\Pa{-\Tr{\Lambda^2}}[\dif\Lambda]\dif\nu(\bsU).
\end{eqnarray}
Moreover,
\begin{eqnarray*}
\mathbb{E}_{\bsH\in
\rG\rU\rE(n)}\Br{f(\bsH)}&=&\frac{2^{\binom{n}{2}}}{\pi^{\frac
n2}\prod^n_{k=1}\Gamma(k)}\int_{+\infty>\lambda_1>\cdots>\lambda_n>-\infty}
\Delta(\lambda)^2\exp\Pa{-\Tr{\Lambda^2}}[\dif\Lambda]\\
&&\times\int\dif\nu(\bsU)\bsU f(\Lambda)\bsU^\dagger \\
&=&\kappa_n(f)\cdot\I_n,
\end{eqnarray*}
where
\begin{eqnarray*}
\kappa_n(f)&\defeq& \frac{2^{\binom{n}{2}}}{n\pi^{\frac
n2}\prod^n_{k=1}\Gamma(k)}\int_{+\infty>\lambda_1>\cdots>\lambda_n>-\infty}\Tr{f(\Lambda)}
\Delta(\lambda)^2\exp\Pa{-\Tr{\Lambda^2}}[\dif\Lambda]\notag\\
&=&\frac1n\frac{2^{\binom{n}{2}}}{\pi^{\frac
n2}\prod^n_{k=1}k!}\int_{\real^n}\Tr{f(\Lambda)}
\Delta(\lambda)^2\exp\Pa{-\Tr{\Lambda^2}}[\dif\Lambda]\\
&=&\frac1n \int_{\real^n}\Tr{f(\Lambda)}p(\Lambda)[\dif\Lambda],
\end{eqnarray*}
and
$$
p(\Lambda) = \frac{2^{\binom{n}{2}}}{\pi^{\frac
n2}\prod^n_{k=1}k!}\Delta(\lambda)^2\exp\Pa{-\Tr{\Lambda^2}}.
$$
Note that in the above reasoning, we used the fact that
$$
\int_{\rU(n)}\dif\nu(\bsU)\bsU f(\Lambda) \bsU^\dagger =
\frac{\Tr{f(\Lambda)}}n\I_n.
$$

\begin{prop}[\cite{Mehta2004}]
Let
$$
H_k(x)=e^{x^2}\Pa{-\frac{\dif}{\dif x}}^ke^{-x^2}=k!\sum^{\Br{\frac
k2}}_{i=0}(-1)^i\frac{(2x)^{k-2i}}{i!(k-2i)!}, \quad k=0,1,2,\ldots
$$
be the $k$-th Hermite polynomial. Denote
$\varphi_k(x)=\frac1{\sqrt{2^kk!\sqrt{\pi}}}e^{-\frac{x^2}2}H_k(x)$.
Then the pdf of a generic eigenvalue of a Hermitian random matrix
from $\rG\rU\rE(n)$ is given by
\begin{eqnarray}\label{eq:density-of-single}
p(x) = \frac1n\sum^{n-1}_{k=0}\varphi^2_k(x) = \varphi^2_n(x) -
\sqrt{1+\frac1n}\varphi_{n-1}(x)\varphi_{n+1}(x),\quad x\in\real.
\end{eqnarray}
\end{prop}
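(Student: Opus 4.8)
The plan is to exploit the \emph{determinantal} structure of the $\rG\rU\rE(n)$ eigenvalue density. Starting from the joint pdf $p(\Lambda) = \frac{2^{\binom{n}{2}}}{\pi^{n/2}\prod^n_{k=1}k!}\Delta(\lambda)^2\exp(-\inner{\lambda}{\lambda})$ recorded just above the statement, I would first convert the Vandermonde determinant $\Delta(\lambda)=\det(\lambda_i^{j-1})$ into a determinant of Hermite polynomials: since $H_{j-1}$ has leading coefficient $2^{j-1}$, column operations give $\Delta(\lambda) = 2^{-\binom{n}{2}}\det\Pa{H_{j-1}(\lambda_i)}$. Distributing the weight $\exp(-\inner{\lambda}{\lambda})=\prod_i e^{-\lambda_i^2}$ as $\prod_i e^{-\lambda_i^2/2}$ into each of the two factors of $\Delta(\lambda)^2$ and absorbing the normalizations of $\varphi_{j-1}(x)=(2^{j-1}(j-1)!\sqrt\pi)^{-1/2}e^{-x^2/2}H_{j-1}(x)$, the matching of the $2^{\binom{n}{2}}$, $\pi^{n/2}$ and $\prod k!$ factors collapses the constant exactly to $1/n!$, yielding
\begin{eqnarray*}
p(\Lambda) = \frac1{n!}\det\Br{K_n(\lambda_i,\lambda_j)}_{i,j=1}^n, \qquad K_n(x,y) := \sum_{k=0}^{n-1}\varphi_k(x)\varphi_k(y),
\end{eqnarray*}
where I have used $\det\Pa{\varphi_{j-1}(\lambda_i)}^2=\det\Br{K_n(\lambda_i,\lambda_j)}$ and $K_n$ is the Christoffel--Darboux kernel of the orthonormal Hermite functions.

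Next I would integrate out $n-1$ of the eigenvalues by Gaudin's lemma. The orthonormality $\int_\real \varphi_j(x)\varphi_k(x)\dif x=\delta_{jk}$ supplies the reproducing property $\int_\real K_n(x,z)K_n(z,y)\dif z=K_n(x,y)$ together with $\int_\real K_n(x,x)\dif x=n$. The standard determinant-contraction identity then gives $\int_{\real^{n-1}}\det\Br{K_n(\lambda_i,\lambda_j)}\prod_{i\geq2}\dif\lambda_i=(n-1)!\,K_n(\lambda_1,\lambda_1)$, so the one-eigenvalue marginal is
\begin{eqnarray*}
p(x) = \frac1{n!}\,(n-1)!\,K_n(x,x) = \frac1n K_n(x,x) = \frac1n\sum_{k=0}^{n-1}\varphi_k^2(x),
\end{eqnarray*}
which is the first equality; the consistency check $\int_\real p(x)\dif x=\frac1n\sum_{k=0}^{n-1}\int_\real\varphi_k^2=\frac1n\cdot n=1$ confirms it is a genuine probability density.

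Finally, for the closed form it suffices to prove the summation identity $\sum_{k=0}^{n-1}\varphi_k^2 = n\varphi_n^2 - \sqrt{n(n+1)}\,\varphi_{n-1}\varphi_{n+1}$, since dividing by $n$ and using $\sqrt{n(n+1)}/n=\sqrt{1+1/n}$ reproduces the stated right-hand side. I would argue by induction on $n$ using the three-term recurrence
\begin{eqnarray*}
x\varphi_k(x) = \sqrt{\tfrac{k}{2}}\,\varphi_{k-1}(x) + \sqrt{\tfrac{k+1}{2}}\,\varphi_{k+1}(x).
\end{eqnarray*}
Writing $S_n:=\sum_{k=0}^{n-1}\varphi_k^2$, the base case $n=1$ is a direct computation with $\varphi_0,\varphi_1,\varphi_2$. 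For the step, the inductive hypothesis gives $S_{n+1}=S_n+\varphi_n^2=(n+1)\varphi_n^2-\sqrt{n(n+1)}\varphi_{n-1}\varphi_{n+1}$; solving the recurrence for $\sqrt{n/2}\,\varphi_{n-1}$ and for $\sqrt{(n+2)/2}\,\varphi_{n+2}$ and substituting shows that $\sqrt{n(n+1)}\varphi_{n-1}\varphi_{n+1}=\sqrt{2(n+1)}\,x\varphi_n\varphi_{n+1}-(n+1)\varphi_{n+1}^2$ and $\sqrt{(n+1)(n+2)}\varphi_n\varphi_{n+2}=\sqrt{2(n+1)}\,x\varphi_n\varphi_{n+1}-(n+1)\varphi_n^2$; the $x\varphi_n\varphi_{n+1}$ terms cancel and one reads off $S_{n+1}=(n+1)\varphi_{n+1}^2-\sqrt{(n+1)(n+2)}\varphi_n\varphi_{n+2}$, closing the induction. (Equivalently one may invoke the confluent Christoffel--Darboux formula together with the derivative relation $\varphi_k'=\sqrt{k/2}\,\varphi_{k-1}-\sqrt{(k+1)/2}\,\varphi_{k+1}$.)

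The main obstacle is bookkeeping rather than conceptual: pinning down the determinantal representation with the \emph{exact} prefactor $1/n!$, so that no spurious scalar survives the Gaudin contraction, requires carefully matching the $2^{\binom{n}{2}}$, $\pi$ and $\prod k!$ factors in $p(\Lambda)$ against those buried in the normalizations of the $\varphi_k$. Once orthonormality of the Hermite functions against the Gaussian weight $e^{-x^2}$ is secured, both the kernel contraction and the inductive identity are routine.
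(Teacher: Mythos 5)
Your proof is correct in all its steps, but there is nothing in the paper to compare it against: the proposition is stated without proof, imported verbatim from Mehta's book \cite{Mehta2004}. What you have reconstructed is essentially the classical argument of that cited source. Your constant-matching is right: with the joint density $\frac{2^{\binom{n}{2}}}{\pi^{n/2}\prod_{k=1}^{n}k!}\Delta(\lambda)^2 e^{-\inner{\lambda}{\lambda}}$ and $\Delta(\lambda)=2^{-\binom{n}{2}}\det\Pa{H_{j-1}(\lambda_i)}$, the prefactor collapses to $\prod_{j=0}^{n-1}j!\big/\prod_{k=1}^{n}k!=1/n!$, so $p(\Lambda)=\frac{1}{n!}\det\Br{K_n(\lambda_i,\lambda_j)}$ with no stray scalar, and the Gaudin contraction (valid because orthonormality gives the reproducing property and $\int_{\real}K_n(x,x)\dif x=n$) yields $p(x)=\frac{1}{n}K_n(x,x)$ exactly as you claim. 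The one place where you depart from the textbook route is the second equality: instead of quoting the Christoffel--Darboux formula and taking the confluent limit $y\to x$, you prove $\sum_{k=0}^{n-1}\varphi_k^2=n\varphi_n^2-\sqrt{n(n+1)}\,\varphi_{n-1}\varphi_{n+1}$ by induction from the three-term recurrence $x\varphi_k=\sqrt{k/2}\,\varphi_{k-1}+\sqrt{(k+1)/2}\,\varphi_{k+1}$. I verified both substitution identities in your induction step and the base case $\varphi_0^2=\varphi_1^2-\sqrt{2}\,\varphi_0\varphi_2$; the cross terms $\sqrt{2(n+1)}\,x\varphi_n\varphi_{n+1}$ do cancel, and dividing by $n$ with $\sqrt{n(n+1)}/n=\sqrt{1+1/n}$ recovers the stated right-hand side. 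This inductive treatment is a nice self-contained alternative that keeps the whole proof elementary, at the modest cost of obscuring that the identity is just the diagonal of the Christoffel--Darboux kernel.
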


In summary, we get that
\begin{eqnarray}
\mathbb{E}_{\bsH\in \rG\rU\rE(n)}\Br{f(\bsH)} =\Pa{
\int^\infty_{-\infty}f(x)p(x)\dif x}\cdot\I_n.
\end{eqnarray}
Here $p(x)$ is taken from \eqref{eq:density-of-single}.
\begin{prop}[\cite{Haagerup2003}]\label{prop:haagerup}
It holds that
\begin{eqnarray}
\mathbb{E}_{\bsH\in\rG\rU\rE(n)}\Br{e^{t\bsH}} =
\kappa_n(\exp,t)\cdot\I_n,\quad t\in\real,
\end{eqnarray}
where the constant $\kappa_n(\exp,t)$ is given by
\begin{eqnarray*}
\kappa_n(\exp,t)=\int^\infty_{-\infty}e^{tx} p(x)\dif x =
e^{\frac{t^2}4} F\Pa{1-n,2;-\frac{t^2}2},
\end{eqnarray*}
and $F$ is the confluent hypergeometric function defined by
$F(a,c;z)\defeq \sum^\infty_{k=0}\frac{(a)_kz^k}{(c)_kk!}$ for
$a,c,z\in\complex$ such that $c\notin \integer\backslash\natural$.
Here $(a)_k:=\prod^{k-1}_{j=0}(a+j)$ is the Pochhammer notation.
\end{prop}
Furthermore, based on Propositions~\ref{prop:haagerup} and
\ref{th:sum-of-Kgue}, we can easily derive the following identity:
Let $\bsC\defeq\bsA+\bsB$ whose eigenvalues consist of the vector
$\bsc=(c_1,\ldots,c_n)$. Then
\begin{eqnarray}
\mathbb{E}_{\bsA,\bsB\in\rG\rU\rE(n)}\Br{f(\bsA+\bsB)}&=&
\frac1{2^{\frac n2} \pi^{\frac{n^2}2}}\int
f(\bsC)\exp\Pa{-\frac12\Tr{\bsC^2}} [\dif\bsC] \\
&=& \kappa_n\cdot\I_n, \nonumber
\end{eqnarray}
where
\begin{eqnarray*}
\kappa_n \defeq
\frac1n\int_{\real^n}\Tr{f(\widehat\bsc)}p_{\bsC,2}(\bsc)[\dif\bsc].
\end{eqnarray*}
%
%

It is well known that, for any two $n\times n$ Hermitian matrices
$\bsX$ and $\bsY$, the Golden-Thompson inequality holds:
\begin{eqnarray}
\Tr{e^{\bsX+\bsY}}\leqslant \Tr{e^{\bsX}e^{\bsY}}.
\end{eqnarray}
Define the ratio
$$
\alpha_n \defeq
\frac{\mathbb{E}_{\bsX,\bsY\in\rG\rU\rE(n)}\Br{\Tr{e^{\bsX}e^{\bsY}}}}{\mathbb{E}_{\bsX,\bsY\in\rG\rU\rE(n)}\Br{\Tr{e^{\bsX+\bsY}}}}.
$$
Then
\begin{eqnarray} \label{eqn:ratio_alphan}
\alpha_n = \frac{F\Pa{1-n,2;-\frac12}^2}{F\Pa{1-n,2;-1}}.
\end{eqnarray}
For small numbers $n=2,3$ and 4, we can analytically obtain that
$\alpha_2= \frac{25}{24}\approx e^{0.040822}$, $\alpha_3=
\frac{1369}{1248}\approx e^{0.0925383}$ and $\alpha_4=
\frac{130321}{112128}\approx e^{0.15036}$. In fact, we have a more
interesting result, which seems that the leading asymptotic form is
$e^{2(\sqrt{2}-1)\sqrt{n}}$ (see Theorem~\ref{th:asympratio} with
its proof). Thus, we observe that the ratio $\alpha_n$ exponentially
increases with the matrix size $n$. This partly answers the question
proposed by Forrester and Thompson in \cite{Forrester2014}.

In what follows, we will derive the asymptotic form of $\alpha_n$.
In order to do that, we need the explicit relationship between
Laguerre polynomials and Kummer's functions (Kummer's function is
also called confluent hypergeometric function $F(a,c;z)$, mentioned
in Proposition~\ref{prop:haagerup}). Some details about the notions
of Laguerre polynomails and Kummer's function and its relationship
can be found in \cite{Olver2010}. Now we can prove the following
result.
\begin{thrm}\label{th:asympratio}
It holds that
\begin{eqnarray}
\lim_{n\to\infty}\frac{\ln\alpha_n}{\sqrt{n}}= 2(\sqrt{2}-1).
\end{eqnarray}
\end{thrm}

\begin{proof}
Using Eq.(13.6.19) in \cite{Olver2010}, we find the explicit
relationship between Laguerre polynomials and Kummer's functions
given as follows. For an arbitrary natural number $n>0$,
$\alpha\in\real$, and $x\in\real$, we have
\begin{eqnarray}\label{eq:keyasymp}
F(-n,\alpha+1;x) = \frac{n!}{(\alpha+1)_n}L^{(\alpha)}_n(x)=
\binom{n+\alpha}{n}L^{(\alpha)}_n(x),
\end{eqnarray}
where $\binom{n+\alpha}{n}$ is the generalized binomial coefficient.
Next, replacing $n$ by $n-1$ and letting $\alpha=1$ in
Eq.~\eqref{eq:keyasymp}, we get that
\begin{eqnarray*}
F(1-n,2;x) = \frac1n L^{(1)}_{n-1}(x).
\end{eqnarray*}
The Laguerre polynomials' asymptotic behavior for large $n$, but
fixed $\alpha$ and $x>0$, is given by \cite{Borwein2008}:
\begin{eqnarray}\label{eq:asympoflag}
L^{(\alpha)}_n(-x)
&=&\frac{(n+1)^{\frac\alpha2-\frac14}}{2\sqrt{\pi}}\frac{e^{-\frac
x2}}{x^{\frac\alpha2+\frac14}}e^{2\sqrt{(n+1)x}}\Pa{1+O\Pa{\frac1{\sqrt{n+1}}}}.
\end{eqnarray}
From Eq.~\eqref{eq:asympoflag}, we infer that, for large $n$ and
fixed $x>0$,
\begin{eqnarray}\label{eq:key2}
F(1-n,2;-x) = \frac1{2\sqrt{\pi}}\frac{\exp\Pa{2\sqrt{nx}-\frac
x2}}{(nx)^{\frac34}}\Pa{1+O\Pa{\frac1{\sqrt{n}}}}
\end{eqnarray}
Based on the observation, substituting $x=\frac12,1$, respectively,
in Eq.~\eqref{eq:key2} and plugging them into the expression of
$\alpha_n$, we derive that
\begin{eqnarray}
\alpha_n \sim \sqrt{\frac2\pi}n^{-\frac34}e^{2(\sqrt{2}-1)\sqrt{n}}
\end{eqnarray}
for large $n$. Therefore,
\begin{eqnarray*}
\lim_{n\to\infty}\frac{\ln\alpha_n}{\sqrt{n}}= 2(\sqrt{2}-1).
\end{eqnarray*}
This competes the proof.
\end{proof}

\section{Applications in quantum information theory}\label{sect:app}

Recall that a positive semi-definite matrix of unit trace are called
a density matrix which are used to describe a state of quantum
systems. Denote the set of all density matrices of size $n$ by
$\density{\complex^n}$. Let $\bsA=\frac12\rho_1$ and
$\bsB=\frac12\rho_2$, where $\rho_1,\rho_2\in\density{\complex^2}$,
the set of density matrices of all qubit states. Then
$\alpha=\frac12(1-\mu,\mu), \beta=\frac12(1-\nu,\nu)$ and
$\gamma=(1-\lambda,\lambda)$ where $\mu,\nu\in(0,1/2)$ and
$\lambda\in (0,1)$. Thus
\begin{eqnarray}
p(\lambda|\mu,\nu) =\begin{cases}
\frac{\frac12-\lambda}{(\frac12-\mu)(\frac12-\nu)},&\text{if }
\lambda\in[T_0,T_1],\\
\frac{\lambda-\frac12}{(\frac12-\mu)(\frac12-\nu)},&\text{if
}\lambda\in\Br{1-T_1,1-T_0},
\end{cases}
\end{eqnarray}
where $T_0:=T_0(\mu,\nu) = \frac{\mu+\nu}2$ and $T_1:=T_1(\mu,\nu) =
\frac{1-\abs{\mu-\nu}}2$. The normalization of $p(\lambda|\mu,\nu)$
is easily obtained by using mathematical software. We omit it here.

Let $\bsC^\diag=\diag(1-x,x)$ for $x\in[0,1]$. Then
$\bsC^\diag=\bsA^\diag+\bsB^\diag$ since $\bsC=\bsA+\bsB$. Thus
there exist $t,s\in[0,1]$ such that
\begin{eqnarray*}
1-x &=& (1-t)\frac{1-\mu}2+ t\frac \mu2 + (1-s)\frac{1-\nu}2+ s\frac
\nu2,\\
x&=&t\frac{1-\mu}2+ (1-t)\frac \mu2 + s\frac{1-\nu}2+ (1-s)\frac
\nu2.
\end{eqnarray*}
This implies that
$$
1-2x = \Pa{\frac12-\mu}(1-2t) + \Pa{\frac12-\nu}(1-2s).
$$
For fixed $\mu,\nu\in\Br{0,\frac12}$, we see that as a function of
arguments $(t,s)\in[0,1]\times[0,1]$,
$$
\Pa{\frac12-\mu}(1-2t) + \Pa{\frac12-\nu}(1-2s)
$$
has the maximum $1-\mu-\nu$ and the minimum $\mu+\nu-1$. Therefore,
$$
\mu+\nu-1\leqslant 1-2x\leqslant 1-\mu-\nu\Longleftrightarrow
x\in[T_0,1-T_0].
$$
We also have that
\begin{eqnarray*}
q(x|\mu,\nu) =
\frac1{2\Pa{\frac12-\mu}\Pa{\frac12-\nu}}\Pa{\abs{x-T_0}+\abs{x-(1-T_0)}-\abs{x-T_1}-\abs{x-(1-T_1)}}.
\end{eqnarray*}
Furthermore, simplifying it into the following form:
\begin{eqnarray}
q(x|\mu,\nu) =
\frac1{\Pa{\frac12-\mu}\Pa{\frac12-\nu}}\begin{cases}x-T_0,&\text{if
}x\in[T_0,T_1],\\T_1-T_0,&\text{if
}x\in[T_1,1-T_1],\\-x+(1-T_0),&\text{if
}x\in[1-T_1,1-T_0].\end{cases}
\end{eqnarray}
Note that $q(x|\mu,\nu)$ is concentrated on $[T_0,1-T_0]\subset
[0,1]$ and $q(x|\mu,\nu)$ is vanished on
$[0,1]\backslash[T_0,1-T_0]$. The normalization can be checked
directly as follows:
\begin{eqnarray*}
\int^1_0 q(x|\mu,\nu)\dif x &=&
\frac1{\Pa{\frac12-\mu}\Pa{\frac12-\nu}}\\
&&\times\Br{\int^{T_1}_{T_0}\Pa{x-T_0}\dif x + \int^{1-T_1}_{T_1}
(T_1-T_0)\dif x + \int^{1-T_0}_{1-T_1}\Pa{-x+(1-T_0)}\dif x}\\
&=&\frac{\Br{\frac12x^2 - T_0x}^{T_1}_{T_0} + (T_1-T_0)(1-2T_1) +
\Br{-\frac12x^2+(1-T_0)x}^{1-T_0}_{1-T_1}}{\Pa{\frac12-\mu}\Pa{\frac12-\nu}},
\end{eqnarray*}
implying that
\begin{eqnarray}
\int^1_0 q(x|\mu,\nu)\dif x =
\frac{(T_1-T_0)(1-T_1-T_0)}{\Pa{\frac12-\mu}\Pa{\frac12-\nu}} = 1.
\end{eqnarray}

\subsection{Uniform average distance between two orbits}

We consider the following quantum Jensen-Shannon divergence which is
defined by
$$
\rQ\rJ\rS\rD(\rho_1,\rho_2):=
\frac12\Br{\rS(\rho_1||\overline{\rho})+\rS(\rho_2||\overline{\rho})},
$$
where $\overline{\rho}=\frac12\rho_1+\frac12\rho_2$ and
$\rS(\rho_i||\overline{\rho})=\Tr{\rho_i(\ln\rho_i-\ln\overline{\rho})}$
is the so-called relative entropy. Clearly,
$\rQ\rJ\rS\rD(\rho_1,\rho_2)=\rS\Pa{\frac12\rho_1+\frac12\rho_2}-\frac12\rS(\rho_1)-\frac12\rS(\rho_2)$.
We may use our results obtained previously to investigate the
average QJSD between two unitary orbits:
\begin{eqnarray*}
&&\iint\dif\mu_{\mathrm{Haar}}(U)\dif\mu_{\mathrm{Haar}}(V)\rQ\rJ\rS\rD(U\rho_1U^\dagger,V\rho_2V^\dagger)\\
&&=\iint\dif\mu_{\mathrm{Haar}}(U)\dif\mu_{\mathrm{Haar}}(V)\rS\Pa{\frac12U\rho_1U^\dagger+\frac12V\rho_2V^\dagger}
-\frac12\rH_2(\mu)-\frac12\rH_2(\nu)\\
&&=
\int^{T_1}_{T_0}\rH_2(\lambda)\frac{\frac12-\lambda}{(\frac12-\mu)(\frac12-\nu)}\dif\lambda+
\int^{1-T_0}_{1-T_1}\rH_2(\lambda)\frac{\lambda-\frac12}{(\frac12-\mu)(\frac12-\nu)}\dif\lambda-\frac12\rH_2(\mu)-\frac12\rH_2(\nu),
\end{eqnarray*}
where $\rH_2(x):=-x\ln x-(1-x)\ln(1-x)$ is the binary entropy
function for $x\in[0,1]$.

This gives an explicit expression about the uniform average QJSD,
i.e., uniform average quantum Jensen-Shannon divergence, between two
distinctive isospectral quantum states. Define
$$
\overline{\rQ\rJ\rS\rD}(\cO_\mu,\cO_\nu) :=
\iint\dif\mu_{\mathrm{Haar}}(U)\dif\mu_{\mathrm{Haar}}(V)\rQ\rJ\rS\rD(U\rho_1U^\dagger,V\rho_2V^\dagger),
$$
where $\cO_x=\Set{U\diag(1-x,x)U^\dagger: U\in\rU(2)}$ for
$x\in[0,1]$. Thus we see that
\begin{eqnarray}
\overline{\rQ\rJ\rS\rD}(\cO_\mu,\cO_\nu) = \int^1_0
\rH_2(\lambda)p(\lambda|\mu,\nu)\dif \lambda
-\frac12\rH_2(\mu)-\frac12\rH_2(\nu),
\end{eqnarray}
where
\begin{eqnarray}
\int^1_0 \rH_2(\lambda)p(\lambda|\mu,\nu)\dif \lambda &=&
\frac1{(\frac12-\mu)(\frac12-\nu)}\\
&&\times\Br{\int^{T_1}_{T_0}\rH_2(\lambda)\Pa{\frac12-\lambda}\dif\lambda+\int^{1-T_0}_{1-T_1}\rH_2(\lambda)\Pa{\lambda-\frac12}\dif\lambda}.
\end{eqnarray}
Note that by using \textsc{Mathematica}~10.0, we get that
\begin{eqnarray*}
F_0(x)&:=&\int \rH_2(x)\dif x = \frac14(1-x)^2 \Br{2\ln (1-x)-1} -
\frac14x^2(2\ln
x-1)\\
F_1(x)&:=&\int \rH_2(x)x\dif x \\
&=& \frac{1}{4} (1-x)^2\Br{2\ln (1-x)-1} - \frac{1}{9}
(1-x)^3\Br{3\ln (1-x)-1} - \frac19x^3(3\ln x-1).
\end{eqnarray*}
We also find that, for $x\in[0,1]$
\begin{eqnarray}
F_0(1-x)+F_0(x) &=& 0,\label{eq:F-0}\\
F_1(x)-F_1(1-x) &=&F_0(x).\label{eq:F-1}
\end{eqnarray}
Thus,
\begin{eqnarray}
\overline{\rQ\rJ\rS\rD}(\cO_\mu,\cO_\nu) = \frac{\Br{\frac12
F_0(\lambda)-F_1(\lambda)}^{T_1}_{T_0}+\Br{F_1(\lambda)-\frac12
F_0(\lambda)}^{1-T_0}_{1-T_1}}{(\frac12-\mu)(\frac12-\nu)}
-\frac12\rH_2(\mu)-\frac12\rH_2(\nu).
\end{eqnarray}
Denote $\overline{\rQ\rJ\rS\rD}(\cO_\mu,\cO_\nu)$ as
$\varphi(\mu,\nu)$, where $(\mu,\nu)\in\Br{0,\frac12}\times
\Br{0,\frac12}$. That is,
\begin{eqnarray}
\varphi(\mu,\nu) := \frac{\Phi(\mu,\nu)}{(\frac12-\mu)(\frac12-\nu)}
-\frac12\rH_2(\mu)-\frac12\rH_2(\nu),
\end{eqnarray}
where
\begin{eqnarray*}
\Phi(\mu,\nu) &=& \Br{\frac12
F_0(\lambda)-F_1(\lambda)}^{T_1}_{T_0}+\Br{F_1(\lambda)-\frac12
F_0(\lambda)}^{1-T_0}_{1-T_1}\\
&=& \frac12(F_0(T_1) + F_0(1-T_1) - F_0(T_0) - F_0(1-T_0))\\
&&+(F_1(T_0) + F_1(1-T_0) - F_1(T_1) - F_1(1-T_1))
\end{eqnarray*}
and $(\mu,\nu)\in\Br{0,\frac12}\times \Br{0,\frac12}$. Thus using
\eqref{eq:F-0}, we get
\begin{eqnarray}\label{eq:Phi}
\Phi(\mu,\nu) = F_1(T_0) + F_1(1-T_0) - F_1(T_1) - F_1(1-T_1).
\end{eqnarray}
And therefore we have
\begin{eqnarray}
\varphi(\mu,\nu) = \frac{F_1(T_0) + F_1(1-T_0) - F_1(T_1) -
F_1(1-T_1)}{(\frac12-\mu)(\frac12-\nu)}-\frac12\rH_2(\mu)-\frac12\rH_2(\nu).
\end{eqnarray}

The symmetry of $\mu$ and $\nu$ is obvious. The maximum is achieved
at $(0,0)$. That is, $\varphi(0,0)=\frac{1}{3} \ln 2 + \frac{1}{6}$.
And the minimum is $\varphi(\frac{1}{2},\frac{1}{2})=0$.  The case
where $\mu$ or $\nu=0$ corresponds to the pure state, and the case
where $\mu$ or $\nu=\frac{1}{2}$ corresponds to the maximum mixed
state. Hence, the quantum Jensen-Shannon divergence is the largest
in the case of two pure states, and smallest in the case of two
maximally mixed states.
In addition, $\varphi(\frac{1}{2},0)=\varphi(0,\frac{1}{2}) = -\frac{3}{4} \ln \frac{3}{4} = 0.2158 $.

\subsection{Average coherence of uniform mixture of two orbits}

In this subsection, we will see that the derivative principle is
naturally related to quantum coherence of relative entropy, a hot
topic which is intensive studied from various viewpoints
\cite{Baumgratz2014}. Let us recall that the coherence of relative
entropy proposed by Baumgratz \emph{et al} is quantified by the
relative entropy between a given state and the nearest incoherent
state to the original one. Specifically, it is given by
$\sC(\rho)=\rS(\rho_\diag)-\rS(\rho)$. Surprisedly, it is determined
by both the diagonal part and its eigenvalues of this given state.
When we study the quantum coherence of relative entropy of a random
quantum state and its typicality, the derivative principle can be naturally
applied. Recently, indeed,  we considered the average coherence and
its typicality \cite{Zhang2017a,Zhang2017b}. Here we will present
more subtle result about quantum coherence of a random state limited
to the qubit case.

In the following we consider the average coherence of uniform
mixture of two unitary orbits with respective prescribed spectra,
i.e.,
\begin{eqnarray}
&&\overline{\sC}(\cO_\mu,\cO_\nu):=\iint\dif\mu_{\mathrm{Haar}}(U)\dif\mu_{\mathrm{Haar}}(V)
\sC\Pa{\frac12U\rho_1U^\dagger+\frac12V\rho_2V^\dagger} \nonumber\\
&&=\iint\dif\mu_{\mathrm{Haar}}(U)\dif\mu_{\mathrm{Haar}}(V)\Br{\rS\Pa{\Pa{\frac12U\rho_1U^\dagger+
\frac12V\rho_2V^\dagger}_\diag}-\rS\Pa{\frac12U\rho_1U^\dagger+\frac12V\rho_2V^\dagger}} \nonumber\\
&&= \int^1_0\rH_2(x)q(x|\mu,\nu)\dif x -
\int^1_0\rH_2(\lambda)p(\lambda|\mu,\nu)\dif \lambda, \label{eqn:TwoIntegrals}
\end{eqnarray}
where
\begin{eqnarray*}
\int^1_0\rH_2(x)q(x|\mu,\nu)\dif x
=\frac{\Br{F_1(x)-T_0F_0(x)}^{T_1}_{T_0} +
(T_1-T_0)\Br{F_0(x)}^{1-T_1}_{T_1} + \Br{(1-T_0)F_0(x)-F_1(x)
}^{1-T_0}_{1-T_1}}{(\frac12-\mu)(\frac12-\nu)}.
\end{eqnarray*}
Denote $\overline{\sC}(\cO_\mu,\cO_\nu)$ as $\psi(\mu,\nu)$, where
$(\mu,\nu)\in\Br{0,\frac12}\times \Br{0,\frac12}$. That is,
\begin{eqnarray}
\psi(\mu,\nu) := \frac{\Psi(\mu,\nu) -
\Phi(\mu,\nu)}{(\frac12-\mu)(\frac12-\nu)},
\end{eqnarray}
where
\begin{eqnarray*}
\Psi(\mu,\nu) &:=& \Br{F_1(x)-T_0F_0(x)}^{T_1}_{T_0} +
(T_1-T_0)\Br{F_0(x)}^{1-T_1}_{T_1} + \Br{(1-T_0)F_0(x)-F_1(x)
}^{1-T_0}_{1-T_1}\\
&=& T_0 \cdot F_0(T_0) + (1-T_0)\cdot F_0(1-T_0)-T_1 \cdot F_0(T_1)
- (1-T_1)\cdot F_0(1-T_1)   -   \Phi(\mu,\nu)
\end{eqnarray*}
and $\Phi(\mu,\nu)$ is from Eq.~\eqref{eq:Phi} and
$(\mu,\nu)\in\Br{0,\frac12}\times \Br{0,\frac12}$. Thus using
\eqref{eq:F-0} again, we get
\begin{eqnarray}
\Psi(\mu,\nu)     +     \Phi(\mu,\nu) =\Pa{2T_0-1}\cdot F_0(T_0)-
\Pa{2T_1-1}\cdot F_0(T_1).
\end{eqnarray}
This implies that
\begin{eqnarray}
\psi(\mu,\nu) := \frac{(2T_0-1)F_0(T_0) -
(2T_1-1)F_0(T_1)  -  2  \Phi(\mu,\nu) }{(\frac12-\mu)(\frac12-\nu)},
\end{eqnarray}

We can see that $\psi(\mu,\nu)$ is completely determined by two
integrals $\int^1_0\rH_2(x)q(x|\mu,\nu)\dif x$ and
$\int^1_0\rH_2(x)p(x|\mu,\nu)\dif x$ in \eqref{eqn:TwoIntegrals}.
The biggest difference between the two integrals appears at $(0,0)$,
while the difference approaches zero at $(\frac{1}{2},\frac{1}{2})$.
The substraction of the two integrals yields the function
$\psi(\mu,\nu)$. The symmetry of $\psi$ on the parameters $\mu$ and
$\nu$  can be easily checked. The maximum is achieved at $(0,0)$.
Specifically we have $\psi(0,0)= \frac{2}{3} (1-\ln 2) = 0.2046$.
The minimum is the limit at $(\frac{1}{2},\frac{1}{2})$, i.e.,
$\psi(\frac{1}{2},\frac{1}{2}) = 0$. In addition,
$\psi(0,\frac{1}{2}) = \psi(\frac{1}{2},0)= \frac{1}{2} -
\frac{3}{8}\ln 3=0.0880$.

\section{Concluding remarks}\label{sect:end}

In this paper, we investigate a variant of Horn's problem, i.e., we
identify the pdf of the diagonals of the sum of two random Hermitian
matrices with given spectra. We then use it, together with the
\emph{derivative principle}, to re-derive the pdf of the eigenvalues
of the sum of two random Hermitian matrices with given spectra.
Zuber's recent results on the same problem can be also recovered. We then apply these
results further to derive the analytical expressions of eigenvalues
of the sum of two random Hermitian matrices from $\rG\rU\rE(n)$ or
Wishart ensemble by derivative principle, respectively. We also
investigate the statistics of exponential of random matrices and
connect them with Golden-Thompson inequality, and partially answer a
question proposed by Forrester. The results obtained are also
employed to analyze quantum Jensen-Shannon divergence between two
unitary orbits with their prescribed spectra and quantum coherence
of mixture of two unitary orbits. Although these applications in
quantum information theory are just around 2-dimensional space, we
believe our methods used in the present work can be extended to
higher dimensional space.

\subsection*{Acknowledgement}

The authors sincerely thank the editor and the anonymous referees
for their comments, especially the proposal for
Theorem~\ref{th:asympratio}, which led to great improvements of this
paper. The authors also thank Lin Huang for pointing out the
connection with Laguerre polynomial. Lin Zhang is supported by
Zhejiang Provincial Natural Science Foundation of China under grant
no. LY17A010027 and also by National Natural Science Foundation of
China (Nos.11971140, 11701259, 61771174). Hua Xiang is supported by
National Natural Science Foundation of China (No.11571265) and
NSFC-RGC (China-Hong Kong, No.11661161017).

\appendix
\appendixpage
\addappheadtotoc

\section{Zuber's derivation of the pdf}\label{app:A}

\begin{thrm}[J-B. Zuber, \cite{JBZ2017}]\label{th:triple-int}
Assume that two random matrices $A$ and $B$ chosen uniformly on the
unitary orbits $\cU(\bsa)$ and $\cU(\bsb)$, respectively, then the joint
pdf $p(\bsc|\bsa,\bsb)$ of the eigenvalues $\bsc$ of the sum
$\bsC:=\bsA+\bsB$ is given by the following integral
\begin{eqnarray}
p(\bsc|\bsa,\bsb) = \mathrm{const.}\Delta(\bsc)^2\int_{\real^n}[\dif
\bsx]\Delta(\bsx)^2\cI(\bsa,\mathrm{i}\bsx)\cI(\bsb,\mathrm{i}\bsx)\overline{\cI(\bsc,\mathrm{i}\bsx)},
\end{eqnarray}
where
\begin{eqnarray}
\cI(\bsa,\bsb) =
\int_{\rU(n)}\exp\Pa{\Tr{\widehat{\bsa}\bsU{\widehat{\bsb}}\bsU^\dagger}}\dif\mu_{\mathrm{Haar}}(\bsU)
\end{eqnarray}
is the famous Harish-Chandra integral for which the explicit formula
can be written down \cite{Itzykson1980}:
\begin{eqnarray}
\cI(\bsa,\bsb) =
\Pa{\prod^n_{k=1}\Gamma(k)}\frac{\det\Pa{e^{a_ib_j}}}{\Delta(\bsa)\Delta(\bsb)}.
\end{eqnarray}
\end{thrm}

\begin{prop}
The pdf of eigenvalues $\bsc$, given $\bsa$ and $\bsb$, is given by
\begin{eqnarray}
p(\bsc|\bsa,\bsb) &=&
\frac{\prod^n_{k=1}\Gamma(k)}{(2\pi)^n(n!)^2\mathrm{i}^{\binom{n}{2}}}
\frac{\Delta(\bsc)}{\Delta(\bsa)\Delta(\bsb)}\int\frac{[\dif
\bsx]}{\Delta(\bsx)}
\det\Pa{e^{\mathrm{i}x_ia_j}}\det\Pa{e^{\mathrm{i}x_ib_j}}\det\Pa{e^{-\mathrm{i}x_ic_j}} \\
&=&
\frac{\prod^n_{k=1}\Gamma(k)}{(2\pi)^nn!\mathrm{i}^{\binom{n}{2}}}
\frac{\Delta(\bsc)}{\Delta(\mathbf{a})\Delta(\mathbf{b})}\int\frac{[\dif
\bsx]}{\Delta(\bsx)}
\det\Pa{e^{\mathrm{i}x_ia_j}}\det\Pa{e^{\mathrm{i}x_ib_j}}\prod^n_{k=1}e^{-\mathrm{i}x_kc_k}.
\label{eqn:p(c;a,b)withDet}
\end{eqnarray}
\end{prop}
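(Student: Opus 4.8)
The plan is to start from Zuber's triple-integral formula (Theorem~\ref{th:triple-int}) and insert the explicit Harish-Chandra--Itzykson--Zuber evaluation of each orbital integral. Writing $\cI(\bsa,\mathrm{i}\bsx)$, $\cI(\bsb,\mathrm{i}\bsx)$ and $\overline{\cI(\bsc,\mathrm{i}\bsx)}$ in closed form produces the factor $\Pa{\prod^n_{k=1}\Gamma(k)}^3$ together with the three determinants $\det\Pa{e^{\mathrm{i}x_ia_j}}$, $\det\Pa{e^{\mathrm{i}x_ib_j}}$, $\det\Pa{e^{-\mathrm{i}x_ic_j}}$ in the numerator and $\Delta(\bsa)\Delta(\bsb)\Delta(\bsc)\Delta(\mathrm{i}\bsx)^2\overline{\Delta(\mathrm{i}\bsx)}$ in the denominator; here I use that $\bsc$ and $\bsx$ are real, so that $\overline{\det\Pa{e^{\mathrm{i}c_ix_j}}}=\det\Pa{e^{-\mathrm{i}x_ic_j}}$ after transposing.

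The first step is bookkeeping of the Vandermonde and phase factors. Since $\Delta(\mathrm{i}\bsx)=\mathrm{i}^{\binom{n}{2}}\Delta(\bsx)$, the factor $\Delta(\bsx)^2$ in Zuber's integrand cancels the $\Delta(\mathrm{i}\bsx)^2$ coming from $\cI(\bsa,\mathrm{i}\bsx)\cI(\bsb,\mathrm{i}\bsx)$ up to $\mathrm{i}^{-n(n-1)}$, one power of $\Delta(\bsc)$ survives from $\Delta(\bsc)^2$, and $\overline{\Delta(\mathrm{i}\bsx)}=(-\mathrm{i})^{\binom{n}{2}}\Delta(\bsx)$ from $\overline{\cI(\bsc,\mathrm{i}\bsx)}$ leaves the single factor $1/\Delta(\bsx)$. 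Collecting the leftover phases, $\mathrm{i}^{n(n-1)}(-\mathrm{i})^{\binom{n}{2}}=\mathrm{i}^{\binom{n}{2}}$, so all stray powers of $\mathrm{i}$ and $(-1)$ collapse to a single $\mathrm{i}^{\binom{n}{2}}$ in the denominator. This reproduces the first displayed line, with the overall multiplicative constant fixed by the normalization of Zuber's formula; equivalently, the constant is pinned down by consistency with \eqref{eqn:q(c|ab)inThm}, already established in Theorem~\ref{thrm:CorollaryDP}.

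The second step, and the main point, is to pass from $\det\Pa{e^{-\mathrm{i}x_ic_j}}$ to the product $\prod^n_{k=1}e^{-\mathrm{i}x_kc_k}$. The key observation is that the remaining integrand
\[
F(\bsx):=\frac{\det\Pa{e^{\mathrm{i}x_ia_j}}\det\Pa{e^{\mathrm{i}x_ib_j}}}{\Delta(\bsx)}
\]
is totally antisymmetric under permutations of $(x_1,\ldots,x_n)$, being a quotient of two antisymmetric factors by one antisymmetric factor. Expanding $\det\Pa{e^{-\mathrm{i}x_ic_j}}=\sum_{\sigma\in S_n}\sign(\sigma)\prod_k e^{-\mathrm{i}x_kc_{\sigma(k)}}$ and relabelling $x_k\mapsto x_{\sigma(k)}$ in the $\sigma$-th summand, the product becomes $\prod_k e^{-\mathrm{i}x_kc_k}$ while $F(\bsx)$ picks up $\sign(\sigma)$; since $[\dif\bsx]$ is permutation-invariant, the two signs combine to $\sign(\sigma)^2=1$ and every one of the $n!$ terms contributes identically. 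Hence $\int[\dif\bsx]\,F(\bsx)\det\Pa{e^{-\mathrm{i}x_ic_j}}=n!\int[\dif\bsx]\,F(\bsx)\prod_k e^{-\mathrm{i}x_kc_k}$, which absorbs one factor of $n!$ (converting the $(n!)^2$ of the first line into $n!$) and turns the first display into the second, identifying it with \eqref{eqn:q(c|ab)inThm}.

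I expect the main obstacle to be the symmetrization step: one must recognize the antisymmetry of $F$ and justify the variable relabelling inside an oscillatory (distributional) integral, where the overall constraint $\sum_j(a_j+b_j-c_j)=0$ is understood through the $\delta$-factor of Corollary~\ref{cor:main}. The phase counting of the first step is routine but error-prone, so I would carry it out explicitly to confirm that the combination $\mathrm{i}^{n(n-1)}(-\mathrm{i})^{\binom{n}{2}}$ really does reduce to the single $\mathrm{i}^{\binom{n}{2}}$ quoted in the statement.
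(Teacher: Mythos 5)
Your proposal is correct and takes essentially the same route as the paper: the first display comes from inserting the explicit Harish-Chandra--Itzykson--Zuber evaluation into Theorem~\ref{th:triple-int} (your phase bookkeeping $\mathrm{i}^{n(n-1)}(-\mathrm{i})^{\binom{n}{2}}=\mathrm{i}^{\binom{n}{2}}$ checks out), and your symmetrization step is precisely the paper's one-line argument --- ``use permutational symmetry of the integrand to replace $\det\Pa{e^{-\mathrm{i}x_ic_j}}$ by $n!\prod^n_{k=1}e^{-\mathrm{i}x_kc_k}$'' --- spelled out in detail via the antisymmetry of $\det\Pa{e^{\mathrm{i}x_ia_j}}\det\Pa{e^{\mathrm{i}x_ib_j}}/\Delta(\bsx)$ and the relabelling $x_k\mapsto x_{\sigma(k)}$. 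Your remark that the overall constant is pinned down by consistency with \eqref{eqn:q(c|ab)inThm} matches the paper, which likewise leaves Zuber's constant implicit and notes the agreement with Theorem~\ref{thrm:CorollaryDP} immediately after the proposition.
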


\begin{proof}
Note that we can use permutational symmetry of the integrand to
replace $\det\Pa{e^{-\mathrm{i}x_ic_j}}$ by
$n!e^{-\mathrm{i}\Inner{\bsx}{\bsc}}=n!\prod^n_{k=1}e^{-\mathrm{i}x_kc_k}$.
\end{proof}

Note that \eqref{eqn:p(c;a,b)withDet} is exactly the same as
\eqref{eqn:q(c|ab)inThm}, which is derived from $q(\bsc|\bsa,\bsb)$
and the derivative principle in Section \ref{sect:der-pin}.

\begin{cor}
The pdf of eigenvalues $\bsc$, given $\bsa$ and $\bsb$, is given by
\begin{eqnarray}
p(\bsc|\bsa,\bsb) =
\frac{\prod^n_{k=1}\Gamma(k)}{(2\pi)^{n-1}n!\mathrm{i}^{\binom{n}{2}}}
\delta\Pa{\sum^n_{k=1}(a_k+b_k-c_k)}\frac{\Delta(\bsc)}{\Delta(\bsa)\Delta(\bsb)}J(\bsa,\bsb:\bsc),
\end{eqnarray}
where $J(\bsa,\bsb:\bsc)$ is given by
\begin{eqnarray}
J(\bsa,\bsb:\bsc) = \sum_{\sigma,\tau\in S_n}
\sign(\sigma\tau)\int_{\real^{n-1}} \frac{[\dif
\bsu]}{\widetilde\Delta(\bsu)}\prod^{n-1}_{k=1}\exp\Br{\mathrm{i}u_kA_k(\sigma,\tau)},
\end{eqnarray}
and where
$$
A_k(\sigma,\tau)= \sum^k_{j=1}\Pa{a_{\sigma(j)}+b_{\tau(j)}-c_j}
-\frac kn\sum^n_{j=1}(a_j+b_j-c_j).
$$
\end{cor}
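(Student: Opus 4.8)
The plan is to run the same change-of-variables argument used in the proof of Corollary~\ref{cor:main}, now applied to the single-power integral appearing in the preceding proposition rather than to the squared-Vandermonde integral of Corollary~\ref{cor:main}. First I would take the representation of $p(\bsc|\bsa,\bsb)$ supplied by \eqref{eqn:q(c|ab)inThm}, so that the entire computation reduces to evaluating
\begin{eqnarray*}
\int_{\real^n}\frac{[\dif \bsx]}{\Delta(\bsx)}\det\Pa{e^{\mathrm{i}x_ia_j}}\det\Pa{e^{\mathrm{i}x_ib_j}}\prod^n_{k=1}e^{-\mathrm{i}x_kc_k}.
\end{eqnarray*}

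Next I would expand the two determinants via \eqref{eqn:CorProof_Det} and the product via \eqref{eqn:CorProof_Product}, so that the integrand factorizes as
\begin{eqnarray*}
\exp\Br{\mathrm{i}\bar{x}\sum^n_{j=1}(a_j+b_j-c_j)}\sum_{\sigma,\tau\in S_n}\sign(\sigma\tau)\prod^{n-1}_{k=1}\exp\Br{\mathrm{i}(x_k-x_{k+1})A_k(\sigma,\tau)},
\end{eqnarray*}
where $A_k(\sigma,\tau)$ is precisely $B_k(\sigma,\tau)$ of \eqref{eqn:BksigmainCor} with $\bsC_{jj}$ replaced by $c_j$. This step is identical bookkeeping to the proof of Corollary~\ref{cor:main}, and the $\sign(\sigma\tau)$ and the $\bar{x}$-coefficient $\sum_j(a_j+b_j-c_j)$ emerge in the same way.

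Then I would perform the transformation $(x_1,\ldots,x_n)\to(\bar x,u_1,\ldots,u_{n-1})$ with $u_k=x_k-x_{k+1}$, which is volume-preserving, so that $[\dif\bsx]=\dif\bar x\,[\dif\bsu]$ exactly as computed earlier. Since $x_i-x_j=u_i+\cdots+u_{j-1}$ for $i<j$, one has $\Delta(\bsx)=\widetilde\Delta(\bsu)$; the sole structural difference from Corollary~\ref{cor:main} is that the integrand now carries $\Delta(\bsx)^{-1}$ rather than $\Delta(\bsx)^{-2}$, so after the substitution the $\bsu$-integral acquires the weight $\widetilde\Delta(\bsu)^{-1}$, which is exactly the weight in the definition of $J(\bsa,\bsb:\bsc)$. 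The $\bar x$-integral decouples and yields $\int_\real e^{\mathrm{i}\bar x\sum_k(a_k+b_k-c_k)}\dif\bar x = 2\pi\delta\Pa{\sum^n_{k=1}(a_k+b_k-c_k)}$. Folding the prefactor $\prod^n_{k=1}\Gamma(k)\big/\big((2\pi)^n n!\,\mathrm{i}^{\binom{n}{2}}\big)$ of \eqref{eqn:q(c|ab)inThm} together with this extra $2\pi$ produces the stated constant $\prod^n_{k=1}\Gamma(k)\big/\big((2\pi)^{n-1}n!\,\mathrm{i}^{\binom{n}{2}}\big)$, which completes the proof.

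As for obstacles, there is essentially no new idea here beyond Corollary~\ref{cor:main}, so the plan amounts to a careful verification. The one point that genuinely demands attention is tracking the exponent of the Vandermonde: confirming that the single power $\Delta(\bsx)^{-1}$ transforms into $\widetilde\Delta(\bsu)^{-1}$ with no stray Jacobian factor, and that the constant then collapses consistently. I would also note, as in the earlier results, that the $\bsu$-integral with the singular weight $\widetilde\Delta(\bsu)^{-1}$ is to be read in the formal/distributional sense rather than as an absolutely convergent integral.
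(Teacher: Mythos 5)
Your proposal is correct and follows essentially the same route as the paper's own proof: starting from \eqref{eqn:q(c|ab)inThm} (which the paper notes coincides with \eqref{eqn:p(c;a,b)withDet}), expanding the determinants and the product via \eqref{eqn:CorProof_Det} and \eqref{eq:ip-formula}, changing variables to $(\bar{x},u_1,\ldots,u_{n-1})$ so that $\Delta(\bsx)=\widetilde\Delta(\bsu)$, and integrating out $\bar{x}$ to produce the factor $2\pi\delta\Pa{\sum^n_{k=1}(a_k+b_k-c_k)}$ that converts the prefactor $(2\pi)^{-n}$ into $(2\pi)^{-(n-1)}$. Your closing remarks on the single power of the Vandermonde and the distributional reading of the $\bsu$-integral are consistent with the paper's treatment.
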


Although an approach toward the integral in
Theorem~\ref{th:triple-int} is sketched by Zuber in \cite{JBZ2017},
we  choose to reconstruct the details for reader's convenience. Note
that there are a few different tricks from the way sketched by
Zuber.

\begin{proof}
We simplify the two determinants in the integral of
\eqref{eqn:p(c;a,b)withDet}. Then applying
\eqref{eqn:CorProof_Det}, we get
\begin{eqnarray*}
\det\Pa{e^{\mathrm{i}x_ia_j}}\det\Pa{e^{\mathrm{i}x_ib_j}}&=&e^{\mathrm{i}\bar{x}\sum^n_{k=1}(a_k+b_k)}\sum_{\sigma,\tau\in S_n} \sign(\sigma\tau)\\
&&\times
\prod^{n-1}_{k=1}\exp\Br{\mathrm{i}(x_k-x_{k+1})\Pa{\sum^k_{j=1}(a_{\sigma(j)}+b_{\tau(j)})}
-\frac kn\sum^n_{j=1}(a_j+b_j)}.
\end{eqnarray*}
We switch to the last term in the integral of
\eqref{eqn:p(c;a,b)withDet}. Thus, by using \eqref{eq:ip-formula}
\begin{eqnarray*}
\prod^n_{k=1}e^{-\mathrm{i}x_kc_k}=
\exp\Br{-\mathrm{i}\bar{x}\sum^n_{k=1}c_k}
\exp\Br{-\mathrm{i}\sum^{n-1}_{k=1}(x_k-x_{k+1})\Pa{\sum^k_{j=1}c_j
- \frac kn\sum^k_{j=1}c_j}}.
\end{eqnarray*}
Therefore, we have
\begin{eqnarray}
&&\det\Pa{e^{\mathrm{i}x_ia_j}}\det\Pa{e^{\mathrm{i}x_ib_j}}\prod^n_{k=1}e^{-\mathrm{i}x_kc_k}\\
&&=e^{\mathrm{i}\bar{x}\sum^n_{k=1}(a_k+b_k-c_k)}\sum_{\sigma,\tau\in
S_n}
\sign(\sigma\tau)\prod^{n-1}_{k=1}\exp\Br{\mathrm{i}(x_k-x_{k+1})A_k(\sigma,\tau)}.
\end{eqnarray}
where
$$
A_k(\sigma,\tau)= \sum^k_{j=1}\Pa{a_{\sigma(j)}+b_{\tau(j)}-c_j}
-\frac kn\sum^n_{j=1}(a_j+b_j-c_j).
$$
It follows that
\begin{eqnarray*}
&&\int_{\real^n}\frac{[\dif \bsx]}{\Delta(\bsx)}
\det\Pa{e^{\mathrm{i}x_ia_j}}\det\Pa{e^{\mathrm{i}x_ib_j}}\prod^n_{k=1}e^{-\mathrm{i}x_kc_k}\\
&&=\int_\real e^{\mathrm{i}\bar{x}\sum^n_{k=1}(a_k+b_k-c_k)}\dif\bar
x\sum_{\sigma,\tau\in S_n} \sign(\sigma\tau)\int_{\real^{n-1}}
\frac{[\dif
\bsu]}{\widetilde\Delta(\bsu)}\prod^{n-1}_{k=1}\exp\Br{\mathrm{i}u_kA_k(\sigma,\tau)},
\end{eqnarray*}
where
$$
\widetilde \Delta(\bsu) := \prod_{1\leqslant i\leqslant j-1\leqslant
n-1}(u_i+u_{i+1}+\cdots +u_{j-1}).
$$
Therefore, we see that
\begin{eqnarray*}
&&\int_{\real^n}\frac{[\dif \bsx]}{\Delta(\bsx)}
\det\Pa{e^{\mathrm{i}x_ia_j}}\det\Pa{e^{\mathrm{i}x_ib_j}}\prod^n_{k=1}e^{-\mathrm{i}x_kc_k}\\
&&=2\pi\delta\Pa{\sum^n_{k=1}(a_k+b_k-c_k)}\sum_{\sigma,\tau\in S_n}
\sign(\sigma\tau)\int_{\real^{n-1}} \frac{[\dif
\bsu]}{\widetilde\Delta(\bsu)}\prod^{n-1}_{k=1}\exp\Br{\mathrm{i}u_kA_k(\sigma,\tau)}.
\end{eqnarray*}
We are done.
\end{proof}



\end{document}